\documentclass[letterpaper,twocolumn,10pt]{article}
\usepackage{usenix2020}

\usepackage{amsmath,amsthm,amssymb,amsfonts,bm,bbm}
\usepackage{graphicx,subcaption}
\usepackage{enumerate,color,soul,makecell}
\usepackage{gensymb}
\usepackage{mathtools}
\usepackage{indentfirst}
\usepackage{algorithm}
\usepackage{algorithmic}

\usepackage{tikz}

\usepackage{multirow}
\newtheorem{theorem}{Theorem}
\newtheorem{definition}{Definition}
\newtheorem{lemma}{Lemma}

\usepackage{pifont}

\newcommand{\xmark}{\ding{55}}
\usepackage{threeparttable}

\author{{\rm Xiaolan Gu}\\ University of Arizona\\{\rm xiaolang@email.arizona.edu}
\and {\rm Ming Li}\\ University of Arizona\\{\rm lim@email.arizona.edu} 
\and {\rm Li Xiong}\\ Emory University\\{\rm lxiong@emory.edu}}

\begin{document}

\date{}
	
\title{\Large \bf PRECAD:  Privacy-Preserving and Robust Federated Learning via \\ Crypto-Aided Differential Privacy}

\maketitle

\begin{abstract}
Federated Learning (FL) allows multiple participating clients to train machine learning models collaboratively by keeping their datasets local and only exchanging model updates. Existing FL protocol designs have been shown to be vulnerable to attacks that aim to compromise data privacy and/or model robustness. Recently proposed defenses   focused on  ensuring either privacy or robustness, but not both. In this paper, we develop  a framework called PRECAD, which simultaneously achieves differential privacy (DP) and enhances robustness against model poisoning attacks with the help of cryptography.  Using secure multi-party computation (MPC) techniques (e.g., secret sharing), noise is added to the model updates by the   honest-but-curious server(s) (instead of each client) without revealing   clients' inputs,  which achieves the benefit of centralized DP in terms of providing a  better privacy-utility tradeoff than local DP based solutions. Meanwhile, a crypto-aided secure validation protocol is designed to verify that the contribution of model update from each client is bounded without leaking  privacy. We show analytically that the  noise added to ensure DP also provides enhanced robustness against malicious model submissions. We experimentally demonstrate that our PRECAD framework achieves higher privacy-utility tradeoff and enhances robustness for the trained models.

\end{abstract}

\section{Introduction}
\label{sec:introduction}

\begin{table*}[t!]
	\footnotesize
	\centering
	\caption{Comparison of differentially private FL models and approaches}
	\renewcommand{\arraystretch}{2.0}
	\vspace{-3mm}
	\begin{threeparttable}
	\begin{tabular}{c|cc|ccc|cc}
		\Xhline{1pt}
		\multirow{2}{6em}{\makecell[c]{Privacy Model\\ and Protocol(s)}} & \multicolumn{2}{c|}{Threat Model of Privacy} &
		 \multicolumn{3}{c|}{Privacy-Utility Tradeoff \tnote{$\mathsection$} ~~(with record-level DP)} & \multicolumn{2}{c}{Robustness against Poisoning Attacks}  \\
		 \cline{2-8}
		 & Server & \makecell[c]{Clients \\ Collusion} &
		\makecell[c]{Noise \\ Generator } &
		\makecell[c]{Perturbation \\ Mechanism} &
		 \makecell[c]{STD of Noise \\ in Aggregation} &  \makecell[c]{Mechanism\\   } & \makecell[c]{Impact of\\  Added Noise}  \\
		\Xhline{1pt}
		\makecell[c]{CDP\\ \cite{mcmahan2018learning,geyer2017differentially} }&  trusted &  all except victim  & server & $\sum_i Q_i+\mathcal{N}(0,\sigma^2)$ &   $\sigma$  & \makecell[c]{via robust\\ aggregation} & \makecell[c]{noise enhances\\ robustness} \\
		\hline
		\makecell[c]{LDP \\ \cite{li2020differentially,pihur2018differentially}} & untrusted & all except victim  & client & $Q_i+\mathcal{N}(0,\sigma^2)$ &   $\sqrt{n}\cdot\sigma$  & \makecell[c]{via robust\\ aggregation} & \makecell[c]{noise reduces\\ robustness} \\
		\hline
		\makecell[c]{DDP+Crypto\tnote{*}\\ \cite{truex2019hybrid,xu2019hybridalpha,lyu2020lightweight}} & honest-but-curious\tnote{$\diamond$} & $\tau$ non-collude & client & $Q_i+\mathcal{N}(0,\frac{\sigma^2}{\tau})$ & $\sqrt{\frac{n}{\tau}}\cdot\sigma$ &  \xmark & --\\
		\hline
		\makecell[c]{\textbf{PRECAD}\\  (ours)\tnote{$\dagger$}} & \makecell[c]{honest-but-curious\tnote{$\diamond$}\\ non-colluding} & all except victim & \makecell[c]{two \\ servers} & \makecell[c]{$\sum_i Q_i+\mathcal{N}(0,2\sigma^2)$ \\ or $\sum_i Q_i+\mathcal{N}(0,\sigma^2)$} & \makecell[c]{$\sqrt{2}\cdot\sigma$ \\ or $\sigma$} & \makecell[c]{via MPC\\ and DP} & \makecell[c]{noise enhances\\ robustness} \\
		\Xhline{1pt}
	\end{tabular}
	\begin{tablenotes}
     \item[*] DDP+Crypto framework assumes a minimum number of non-colluding clients (i.e., the value of $\tau$) out of $n$ total clients, which has influence on the privacy guarantees. Symbol ``--'' indicates not applicable.
     \item[$\diamond$]  Honest-but-curious means that the server follows protocol instructions honestly, but will try to learn additional information.
     \item[$\mathsection$] We show the privacy-utility tradeoff by fixing the same privacy cost (with required noise) and then comparing the standard deviation (STD) of the noise on the aggregation, where approaches with a smaller STD   has a better utility. $Q_i$ represents the local model update (without noise) from client $\mathsf{C}_i$. For convenience, we ignore the scaling factor for averaging local models, and let the record-level sensitivity of all $Q_i$ be 1, which can be achieved via record-level clipping. 
     \item[$\dagger$] PRECAD considers two types of attacker settings on privacy, i.e., with/without a corrupted server. The latter   assumption is weaker, thus needs less noise.
   \end{tablenotes}
   \end{threeparttable}
	\label{tab:DP_compare}
\end{table*}

Federated learning (FL) \cite{mcmahan2017communication}  is an emerging paradigm that enables multiple clients to collaboratively learn models without explicitly sharing their data. The clients upload their local model updates to the server, who then shares the global average with the clients in an iterative process. This offers a promising solution to mitigate the potential privacy leakage of sensitive information about individuals (since the data stays locally with each client), such as typing history, shopping transactions, geographical locations, medical records, and etc. However, recent works have demonstrated that FL may not always provide sufficient \emph{privacy} and \emph{robustness} guarantees.  In terms of privacy leakage, exchanging the model updates throughout the training process can still reveal sensitive information \cite{bhowmick2018protection,melis2019exploiting} and cause deep leakage such as  pixel-wise accurate image recovery  \cite{zhu2019deep,yin2021see}, either to a third-party (including other participating clients) or the central server. In terms of robustness, FL systems are vulnerable to model poisoning attacks, where the attacker controls a subset of (malicious) clients, aiming  to either prevent
the convergence of the global model (a.k.a. Byzantine attacks) \cite{fang2020local,baruch2019little}, or implant a backdoor trigger into the global model to cause targeted misclassification (a.k.a. backdoor attacks)  \cite{bagdasaryan2020backdoor,wang2020attack}. 

To mitigate the privacy leakage in FL, Differential Privacy (DP) \cite{dwork2006calibrating,dwork2014algorithmic} has been adopted  as a rigorous privacy notion. Several existing frameworks  \cite{mcmahan2018learning,geyer2017differentially,li2020differentially} applied DP in FL to provide \emph{client-level} privacy under the assumption of a trusted server: whether a client has participated in the training process cannot be inferred from the released model,  and the client's whole dataset remains private. Other works in FL \cite{zheng2021federated,li2020differentially,xu2019hybridalpha,truex2019hybrid} focused on \emph{record-level} privacy: whether a data record  has participated during training can not be inferred by adversaries, including the server which may be untrusted. Record-level privacy is more relevant in cross-silo (as versus cross-device) FL scenarios, such as multiple hospitals collaboratively learn a prediction model for COVID-19, in which case what needs to be protected is the privacy of each patient (corresponding to each record in a hospital's dataset).



In this paper, we focus on cross-silo FL with \emph{record-level} DP, where each client possesses a set of raw records (which are aggregated from some individuals), and each record corresponds to an individual's private data. Existing solutions in this setting can be categorized as Centralized DP (CDP), Local DP (LDP)\footnote{Following \cite{lyu2020privacy,naseri2020toward,truex2019hybrid}, we use LDP to refer to the client based approaches for ease of presentation, but it is different from the traditional LDP for data collection in \cite{duchi2013local,wang2017locally,gu2020pckv}.}, and Distributed DP (DDP). In CDP-based solutions \cite{mcmahan2018learning,geyer2017differentially}, each client submits the raw update to a \emph{trusted} server who applies a model aggregation mechanism with DP guarantees. In LDP-based solutions \cite{li2020differentially,pihur2018differentially}, each client adds noise to the update with DP guarantees before sending it to the server, where the server and other clients are assumed as \emph{untrusted}. Though relying on a weaker trust assumption, LDP approaches suffer from poor utility because the noise added by all the clients are accumulated when the server aggregates all updates (as verses in CDP only one party, the server, adds noise).  In DDP-based solutions \cite{truex2019hybrid,xu2019hybridalpha,lyu2020lightweight}, each client adds \emph{partial} noise to achieve the global DP noise as in server based (i.e., the required noise are jointly added by all clients in a \emph{distributed} way), and sends the encrypted output to the server. The utilized cryptographic primitives, such as additive homomorphic encryption (HE), guarantee that all but the final result are hidden from the server and other clients. By leveraging cryptography techniques, DDP-based solutions avoid placing trust in the server (instead with an honest-but-curious assumption), and offers better utility than LDP-based solutions. However, the utility enhancement of DDP-based approaches is sensitive to the minimum number of trusted parties, which should be known in advance to derive the required noise amount, and it reduces to LDP in the worst case when all-but-one clients collude to infer the victim client's data. Furthermore, such encryption based methods prohibits the server from auditing clients' model updates, which leaves room for malicious attacks. For example, malicious clients can  introduce stealthy backdoor functions into the global model without being detected.

On the robustness of FL, recent works \cite{sun2019can,naseri2020toward} empirically observed that the noise added in the clipped gradients under CDP and/or LDP is able to defend against backdoor attacks. The intuition is that CDP limits the information learned about a specific client, while LDP does so for records in a client’s dataset. In both cases, the impact of poisoned data will be reduced, while simultaneously providing DP guarantees. However, CDP assumes a trusted server, and LDP defends against backdoor attacks only when the corrupted (malicious) clients also implement the noise augmentation mechanism. If  malicious clients opt out from the LDP protocol \cite{naseri2020toward}, the framework becomes less robust,  because the malicious clients can potentially have a bigger impact on the aggregated model when other benign clients honestly add noise (thus have less impact) to satisfy LDP. In other words, the noise of LDP actually reduces robustness in practice where the attacker has full control on the corrupted clients. Also, a concurrent work \cite{guerraoui2021differential} shows that (in both theoretical and experimental perspectives) the classical approaches to Byzantine-resilience and LDP are practically incompatible in machine learning. Furthermore, similar observations of manipulation vulnerability of LDP have been made in \cite{cheu2021manipulation,cao2021data} under the application of data aggregation for statistic queries.

In summary, neither CDP, LDP, nor DDP based solutions can achieve both privacy and robustness without sacrificing performance, where CDP does not protect privacy against the server, LDP has poor utility and is vulnerable to a strong attacker who has full control over malicious clients, and DDP is not able to audit (malicious) clients' updates. The main challenge lies in the dilemma between the server learning little information from  clients' data (for privacy), while being able to detect anomalous submissions injected by malicious clients (for robustness).

In this paper, we propose a \textbf{P}rivacy and \textbf{R}obustness \textbf{E}nhanced FL framework via \textbf{C}rypto-\textbf{A}ided \textbf{D}P (\textbf{PRECAD}), which is the first scheme that  simultaneously provides strong DP guarantees and \emph{quantifiable} robustness against model poisoning attacks, while providing good model utility. This is achieved by combining DP with secure multi-party computation (MPC) techniques (secret sharing). PRECAD involves two honest-but-curious and non-colluding servers. This setting has been widely formalized and instantiated in previous works such as \cite{mohassel2017secureml,roy2020crypt,corrigan2017prio,agrawal2019quotient,he2020secure}. In PRECAD, each client implements clipping of gradients (which will be the local updates) in both record-level and client-level, then uploads the random shares of its local update to two servers respectively. The two servers run a secret-sharing based MPC protocol to securely verify \emph{client-level} clipping, and jointly adds Gaussian noise to the aggregated result. Our protocol guarantees that the two servers can only learn the validity of client-submitted model update  (whether it is bounded by the clipping norm, which leaks nothing about a benign client's data), and the result of the noisy aggregation (where the privacy leakage at record-level is carefully accounted by DP). PRECAD simulates the CDP paradigm (a.k.a. SIM-CDP \cite{mironov2009computational}) in FL, but does not rely on the assumption of trusted server(s). Furthermore, due to the verifiable client-level clipping, the contribution of each (malicious) client to the global model is bounded. We use DP to show the bounds of robustness guarantee, where the noise in PRECAD for privacy purpose \emph{enhances} robustness (in contrast, noise in LDP \emph{reduces} robustness). A comparison among the different privacy  models/frameworks, and their privacy/robustness guarantees (including CDP, LDP, DDP and PRECAD)  is shown in Table \ref{tab:DP_compare}. 

\textbf{Contributions.}
Our main contributions include:

1) The proposed scheme PRECAD is the first framework that simultaneously enhances privacy-utility tradeoff of DP and robustness against model poisoning attacks for FL under a practical threat model: for private data inference, we assume servers are honest-but-curious and allow the collusion of all clients except the victim; for model poisoning attacks, the attackers has full control over the corrupted clients. In PRECAD, the record-level clipping at each client, combining with secret sharing and server perturbation, ensures record-level privacy, while the client-level clipping and secure verification ensure robustness against malicious clients.

2) We show that PRECAD satisfies record-level DP in Theorem \ref{thm:privacy_analysis} while providing quantifiable robustness against model poisoning attacks in Theorem \ref{thm:robustness_analysis}, where the theoretical analysis shows that the noise for privacy purpose enhances the robustness guarantees.

3) We conduct several experiments to demonstrate PRECAD's enhancement on the privacy-utility tradeoff (compared with LDP-based approaches) and robustness against backdoor attacks (which is a prominent type of targeted model-poisoning attacks). The experimental results validate our theoretical analysis that DP-noise in PRECAD enhances model robustness.

\section{Preliminaries}

\subsection{Differential Privacy (DP)}
\label{sec:preliminaries_DP}

Differential Privacy (DP) is a rigorous mathematical framework for the release of information derived from private data. Applied to machine learning, a differentially private training mechanism allows the public release of model parameters with a strong privacy guarantee: adversaries are   limited in what they can learn about the original training data based on analyzing the parameters, even when they have access to arbitrary side information. The formal definition is as follows:
\begin{definition}[$(\epsilon,\delta)$-DP \cite{dwork2014algorithmic,dwork2006calibrating}]
\label{def:DP}
    For $\epsilon\in[0,\infty)$ and $\delta\in[0,1)$, a randomized mechanism $\mathcal{M}:\mathcal{D}\rightarrow \mathcal{R}$ with a domain $\mathcal{D}$ (e.g., possible training datasets) and range $\mathcal{R}$ (e.g., all possible trained models) satisfies $(\epsilon,\delta)$-Differential Privacy (DP) if for any two neighboring datasets $D,D^\prime\in\mathcal{D}$  and for any subset of outputs $S\subseteq \mathcal{R}$, it holds that
    \begin{align*}
        \mathbb{P}[\mathcal{M}(D)\in S]\leqslant e^\epsilon\cdot \mathbb{P}[\mathcal{M}(D^\prime)\in S] +\delta
    \end{align*}
    where a larger $\epsilon$ and $\delta$ indicate a less private  mechanism.
\end{definition}

\textbf{Gaussian Mechanism.} 
A common paradigm for approximating a deterministic real-valued function $f:\mathcal{D}\rightarrow\mathbb{R}$ with a differentially private
mechanism is via additive noise calibrated to $f$'s sensitivity
$s_f$, which is defined as the maximum of the absolute distance $|f(D)-f(D^\prime)|$, where $D$ and $D^\prime$ are neighboring datasets. The Gaussian Mechanism is defined by $\mathcal{M}(D)=f(D)+\mathcal{N}(0,s_f^2\cdot\sigma^2)$, where $\mathcal{N}(0,s_f^2\cdot\sigma^2)$ is the normal (Gaussian) distribution with mean 0 and standard deviation $s_f\sigma$. It was shown that the mechanism $\mathcal{M}$ satisfies $(\epsilon,\delta)$-DP if $\delta\geqslant\frac{4}{5}e^{-(\sigma\epsilon)^2/2}$ and $\epsilon<1$ \cite{dwork2014algorithmic}. Note that we use an advanced privacy analysis tool in \cite{dong2019gaussian} (refer to Lemma \ref{lem:Gaussian_mechanism_GDP} in Appendix \ref{apx:GDP}), which works for all $\epsilon>0$. 

\textbf{DP-SGD Algorithm.}
The most well-known differentially private algorithm in machine learning is DP-SGD \cite{abadi2016deep}, which introduces two modifications to the vanilla stochastic gradient descent (SGD). First, a \emph{clipping step} is applied to the gradient so that the gradient is in effect bounded. This step is necessary to have a finite sensitivity. The second modification is \emph{Gaussian noise augmentation} on the summation of clipped gradients, which is equivalent to applying the Gaussian mechanism to the updated iterates. The privacy accountant of DP-SGD is shown in Appendix \ref{apx:GDP}.

\subsection{FL with DP}
\label{sec:preliminaries_FL_DP}

FL is a collaborative learning setting to train machine learning models. It involves multiple clients, each holding their own private dataset, and a central server (or aggregator). Unlike the traditional centralized approach, data is not stored at a central server; instead, clients train models locally and exchange updated parameters with the server, who aggregates the received local model parameters and sends them to the clients. FL involves multiple iterations. At each iteration, the server randomly chooses a subset of clients and sends them the current model parameter; then these clients locally compute  training gradients according to their local datasets and send  the updated parameters to the server. The latter aggregates the results and updates the global model. After a certain number of iterations (or until convergence), the final model parameter is returned as the output of the FL process.

\textbf{Record-level v.s. Client-level DP.}
In FL, the \emph{neighboring datasets} $D$ and $D^\prime$ (in Definition \ref{def:DP}) can be defined at two distinct levels:  \emph{record-level} and \emph{client-level}. For record-level DP, $D$ and $D^\prime$ are defined to be neighboring if $D^\prime$ can be formed by adding or removing a single training record/example from $D$. On the other hand, for client-level DP, $D^\prime$ is obtained by adding or removing one client's whole training dataset from $D$. In this paper, the main privacy goal is to guarantee \emph{record-level} DP, as this is most relevant in the FL applications we consider. On the other hand, \emph{client-level} DP  is also achieved as a by-product of our protocol, and we will utilize it to show the robustness against model poisoning attacks.

\subsection{Secret Sharing}
\label{sec:preliminaries_secret_sharing}
We will make use of the additive secret sharing primitive in \cite{cramer2005share,corrigan2017prio}.  
Consider a decentralized setting with $n$ clients and $s$ servers, each client holds an integer $x_i$ and the servers want to compute the sum of the clients' private values $\sum_{i=1}^n x_i$. All arithmetic of a secret sharing scheme takes place in a finite field $\mathbb{F}$ with a public, large prime $\mathsf{p}$. For convenience, we use $c=a+b\in\mathbb{F}$ to indicate $c=a+b \mod \mathsf{p}$. The additive secret sharing for computing sums proceeds in three steps:

Step 1: Upload. Each client $i$ splits its private value $x_i$ into $s$ shares, one per server, using a secret-sharing scheme. In particular, the client picks random integers $[x_i]_1,\cdots,[x_i]_s\in\mathbb{F}$, subject to the constraint: $x_i=\sum_{j=1}^s [x_i]_j \in\mathbb{F}$. The client then sends one share of its submission to each server through a secure (private and authenticated) channel.

Step 2: Aggregate. Each server $j$ aggregates all received shares from $n$ clients by computing the value of an accumulator $A_j=\sum_{i=1}^n [x_i]_j \in\mathbb{F}$.

Step 3: Publish. Once the servers have received shares from all clients, they publish their accumulator values. Computing the sum of the accumulator values $\sum_{j=1}^s A_j\in\mathbb{F}$  yields the desired sum $\sum_{i=1}^n x_i$ of the clients' private values, as long as the modulus $\mathsf{p}$ is larger than the final result  (i.e., the sum $\sum_{i=1}^n x_i$ does not \emph{overflow} the modulus).

The above secret sharing scheme protects clients' privacy in an unconditional and information-theoretic sense: an attacker who gets hold of any subset of up to $s-1$ shares of $x_i$ (i.e., at least one of the servers is honest) learns nothing, except what the aggregate statistic $\sum_{i=1}^n x_i$ itself reveals.


\section{Problem Statement}

\subsection{System Model}
\label{sec:system_model}

We assume multiple parties in the FL system: two aggregation servers  ($\mathsf{S_A}$ and $\mathsf{S_B}$) and $n$ participating clients $\{\mathsf{C}_1,\cdots,\mathsf{C}_n\}$.  The servers hold a global model $\theta_t\in\mathbb{R}^d$  and each client $\mathsf{C}_i$ possesses a private training dataset $D_i$. Each server communicates with the other server and each client through a secure (private and authenticated) channel. At the $t$-th iteration ($t=0,1,\cdots,T$), the servers randomly select a subset of clients $\mathcal{I}_t\subseteq\{1,\cdots,n\}$ and send the current global model parameter $\theta_t$ to them. Next, each client $\mathsf{C}_i~(i\in\mathcal{I}_t)$ who receives global $\theta_t$ trains the local model $\theta_t^i$ from its own private dataset $D_i$ and sends the update $\Delta\theta_t^i=\theta_t^i-\theta_t$ (i.e., the difference between the local model and the global model) to the servers. Then, the servers update the global model by aggregating all $\Delta \theta_t^i$:
\begin{align}
\label{equ:system_model}
	\theta_{t+1} = \theta_t + \eta\cdot\sum\nolimits_{i\in \mathcal{I}_t} w_t^i\Delta \theta_t^i,
\end{align}
where $\eta$ is the learning rate of the global model and $w_t^i$ is the aggregation weight of client $\mathsf{C}_i$ at the $t$-th iteration. They  will keep iterating the above procedure until convergence.

Note that the model parameter $\theta_t$ is a $d$-dimensional real vector (i.e., $\theta_t\in\mathbb{R}^d$), while the utilized secret sharing scheme for secure aggregation is taken in a finite field $\mathbb{F}^d$. For any real value $a\in\mathbb{R}$ (or a real vector $a\in\mathbb{R}^d$),  we can embed the real values into a finite field $\mathbb{F}$ (with size $\mathsf{p}$) using a \emph{fixed-point} representation, as long as the size of the field is large enough to avoid overflow. In this paper, we use $[a]_\mathbb{F}$ to denote the fixed-point representation (within a finite field $\mathbb{F}$) of a real value $a$ and use $[a]_\mathsf{A}$ to denote the share of $[a]_\mathbb{F}$ held by $\mathsf{S_A}$, where $[a]_\mathbb{F},[a]_\mathsf{A},[a]_\mathsf{B}\in\mathbb{F}$ and $[a]_\mathbb{F}=[a]_\mathsf{A}+[a]_\mathsf{B}\mod\mathsf{p}$.

\subsection{Threat Model}
\label{sec:threat_model}

In this paper, we consider two types of attacks: record inference attacks and model poisoning attacks, where the first one compromises data privacy and the second one compromises the model robustness. The threat models of them are different:

\textbf{Record Inference Attacks.} 
To infer one record of a benign client (who is the victim), the attacker can corrupt  \emph{at most one} server and a subset of clients (except the victim). We assume that the two servers are \emph{honest-but-curious} and \emph{non-colluding}. This setting has been widely formalized and instantiated in previous works such as \cite{mohassel2017secureml,roy2020crypt,corrigan2017prio,agrawal2019quotient,he2020secure}. Non-colluding means that they avoid revealing information to each other beyond what is allowed by the protocol definition. Honest-but-curious (a.k.a. semi-honest) means that they follow the protocol instructions honestly, but will try to learn additional information. We assume the corrupted clients are also honest-but-curious, and can communicate with the corrupted server. We note that even when the corrupted clients deviate from the protocol, they do not obtain additional gain for privacy inference because their submissions only affect the parameter of the global model but not the privacy protocol.

\textbf{Model Poisoning Attacks.} 
The attacker can corrupt a subset of malicious clients (but not the servers) to implement the model poisoning attacks. We assume the attacker has full control on both the local training data and the submission to the servers over these corrupted (malicious) clients, but has no influence on other benign clients. Furthermore, the malicious clients can fully cooperate with each other to achieve a stronger attack  influence. We assume that the number of malicious clients is less than benign clients. Following \cite{bagdasaryan2020backdoor,sun2019can,naseri2020toward}, we consider a \emph{model-replacement} methodology for model poisoning attacks.  In FL (described in Sec. \ref{sec:system_model}), each selected client $\mathsf{C}_{i}(i\in\mathcal{I}_t)$ sends the update $\Delta\theta_t^i=\theta_t^i-\theta_t$ to the servers. At $t$-th iteration, we assume only one malicious client (say client $\mathsf{C}_{i^*}$) is selected, then he/she attempts to replace the global model by a targeted model $\theta^*$ via sending $\Delta \theta_t^{i^*} = \frac{1}{\eta w_t^{i^*}}(\theta^*-\theta_t)$. We note that each local model $\theta_t^i$ may be far from the global model $\theta_t$. However, as the global model converges, these deviations start to cancel out, i.e., $\sum_{i\in \mathcal{I}_t, i\neq i^*} w_t^i\Delta \theta_t^i=\sum_{i\in \mathcal{I}_t, i\neq i^*} w_t^i(\theta_t^i-\theta_t) \approx 0$. Therefore, if we assume the model has sufficiently converged, the parameter of the global model in \eqref{equ:system_model} will be replaced by $\theta_{t+1} = \theta^* + \eta\sum_{i\in \mathcal{I}_t, i\neq i^*} w_t^i\Delta \theta_t^i~~\approx \theta^* $. When multiple malicious clients appear in the same iteration, we assume that they can coordinate with each other and divide the malicious update $\Delta \theta_t^{i^*}$ evenly. Furthermore, such attack can be implemented with multiple iterations.

\section{Our Framework: PRECAD}

In this section, we introduce the proposed framework called PRECAD. It simulates Centralized DP (CDP) in FL, without relying on the assumption of trusted server(s), and provides quantifiable robustness against malicious clients.

\label{sec:framework}
\begin{figure}[!t]
    \centering
    \includegraphics[width=3.4in]{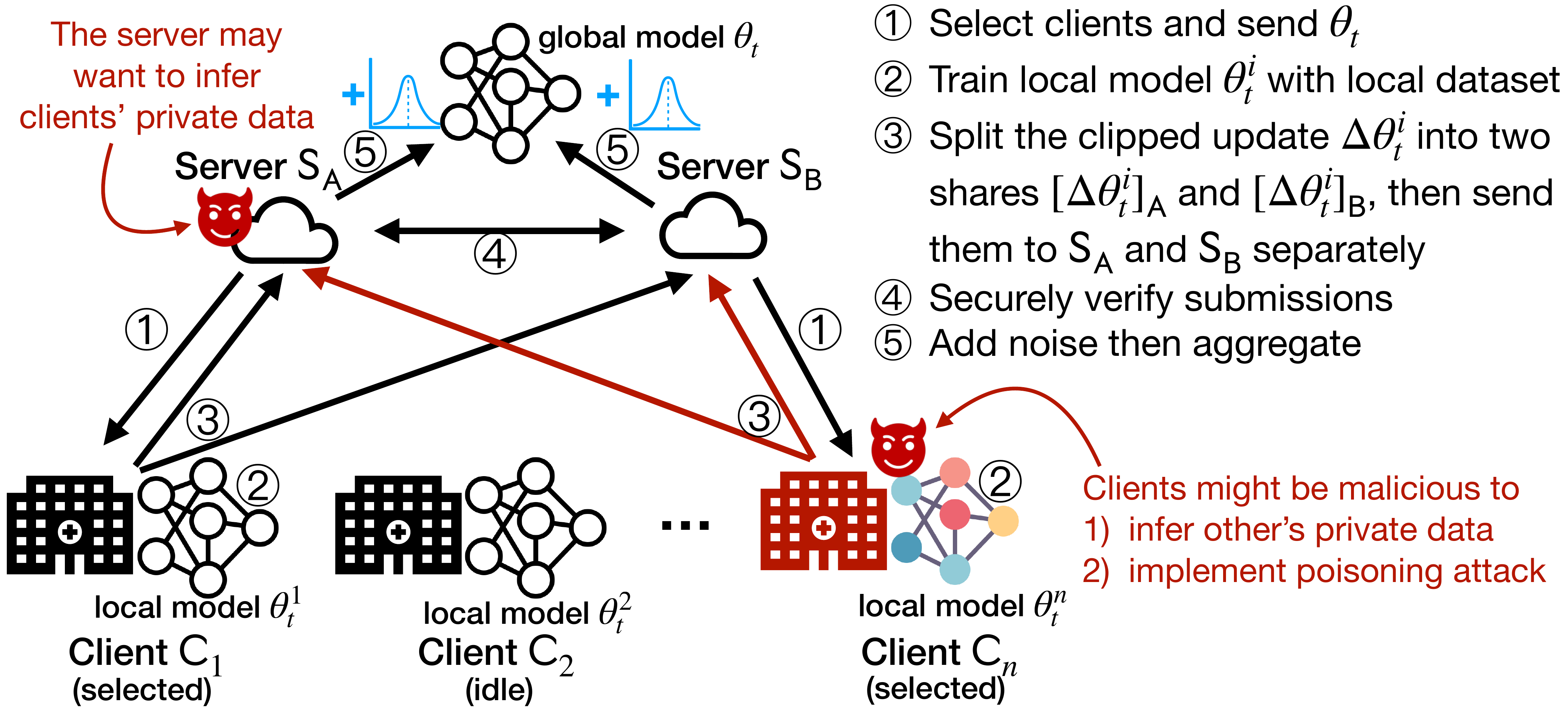}
    \vspace{-6mm}
    \caption{Illustration of our  PRECAD framework.}
    \label{fig:framework}
\end{figure}

\subsection{Overview}
An illustration of our framework PRECAD is shown in Figure \ref{fig:framework}. It follows the general FL process (as discussed in Sec. \ref{sec:system_model}) with secure aggregation via additive secret sharing (refer to Sec. \ref{sec:preliminaries_secret_sharing} for preliminaries of secret sharing). In PRECAD, two servers engage in an interactive protocol to add Gaussian noise to the aggregated model updates,  using a secure aggregation method. Note that other MPC primitives are possible, but we opt to use additive secret sharing in this paper. In each iteration, the model update of each client is supposed to be clipped before submission, and the servers run another MPC protocol based on the submitted shares, to verify the validity of each submission (ensure the norm is smaller than a threshold), in order to mitigate poisoning attacks. Since the noise for DP is added by the servers (rather than by the clients) after clients' submissions, we will show that (in Sec. \ref{sec:robustness_analysis}) it can actually enhance model robustness.

\begin{algorithm}[!t]
    \small
	\caption{PRECAD}
	\begin{algorithmic}[1]
		\REQUIRE Initialization $\theta_0\in\mathbb{R}^d$, size of dataset $|D_i|$ (for $i=1,\cdots,n$), record-level clipping norm bound $R$, client-level clipping norm bound $C$, learning rate $\eta$ of the global model.
		\FOR{$t=0,1,\cdots,T-1$}
		\STATE Select a subset of clients $\mathcal{I}_t\subseteq\{1,\cdots,n\}$ , where each client is selected with probability $q$.
		\STATE Each client $\mathsf{C}_i~(i\in\mathcal{I}_t)$ implements the local updates via Algorithm \ref{alg:local_updates} and sends shares $[\Delta\theta_t^i]_\mathsf{A}$ to server $\mathsf{S_A}$  and  $[\Delta\theta_t^i]_\mathsf{B}$ to server $\mathsf{S_B}$.
		\STATE $\mathsf{S_A}$ and $\mathsf{S_B}$ securely verify whether $\|\Delta\theta_t^i\|_2\leqslant C$ holds via secure validation and identify the valid clients $\mathcal{I}_t^*\coloneqq \{i\in\mathcal{I}_t: \|\Delta\theta_t^i\|_2\leqslant C\}$ without learning anything else.
		\STATE $\mathsf{S_A}$ and $\mathsf{S_B}$ independently draw random Gaussian noise $\xi_t^\mathsf{A},\xi_t^\mathsf{B}\sim\mathcal{N}(0,(R\sigma)^2\cdot \mathbf{I})$, compute the noisy shares $[\sum_{i\in\mathcal{I}_t^*}\Delta\theta_t^i]_\mathsf{A}+[\xi_t^\mathsf{A}]_\mathbb{F}$ and $[\sum_{i\in\mathcal{I}_t^*}\Delta\theta_t^i]_\mathsf{B}+[\xi_t^\mathsf{B}]_\mathbb{F}$ respectively and sends the result to another server.
		\STATE Servers obtain the updated global model $\theta_{t+1}$ by \eqref{equ:global_model_update}.
		\ENDFOR
		\ENSURE The final model parameter $\theta_T$.
	\end{algorithmic}
	\label{alg:framework}
\end{algorithm}

\textbf{Main Steps.}
Our instantiation of PRECAD is shown in Algorithm \ref{alg:framework}. In each iteration $t$, the servers and clients execute the following four steps:

Step 1. Selection of Participating Clients. The servers select a subset of clients to participate in the current iteration. Since both servers follow the protocol honestly, either server $\mathsf{S_A}$ or $\mathsf{S_B}$ can execute this step, where each client is randomly selected with probability $q$. Then, servers send the current model parameter $\theta_t\in\mathbb{R}^d$ to these clients and wait for responses. 

Step 2. Local Model Update and Submission. After receiving the request from the servers, each client $\mathsf{C}_i$ trains the local model with the private local dataset $D_i$, where the gradient per record is clipped by $R$ (\emph{record-level} clipping). Then,  the update is clipped by $C$ (\emph{client-level} clipping) and the clipped update  $\Delta\theta_t^i\in\mathbb{R}^d$ is split into two shares $[\Delta\theta_t^i]_\mathsf{A},[\Delta\theta_t^i]_\mathsf{B}\in\mathbb{F}^d$ (see more details in Sec. \ref{sec:local_update}), which are sent to server $\mathsf{S_A}$ and $\mathsf{S_B}$ respectively.

Step 3. Secure Submission Verification. Since a malicious client may   send a large submission (with $\ell_2$-norm beyond the bound $C$), the servers, who hold shares $[\Delta\theta_t^i]_\mathsf{A}$ and $[\Delta\theta_t^i]_\mathsf{B}$ respectively, must   securely verify whether the $\ell_2$-norm of each $\Delta\theta_t^i$ is indeed bounded by $C$ (refer to Sec. \ref{sec:secure_validation} for the detailed protocol). Note that, this verification step, which outputs either valid (the servers accept $\Delta\theta_t^i$) or invalid (the servers reject $\Delta\theta_t^i$), leaks nothing about a \emph{benign} client's private information because benign clients follow the protocol honestly  and their local model update submissions will always be valid.

Step 4. Secure Aggregation with Noise. After verifying all submissions, server $\mathsf{S_A}$ draws a real random vector $\xi_t^\mathsf{A}\sim\mathcal{N}(0,(R\sigma)^2\cdot \mathbf{I})$ and converts it into the fixed-point representation $[\xi_t^\mathsf{A}]_\mathbb{F}$, where $\mathbf{I}$ is the identity matrix with size $d$. Denote the set of  indices of \emph{valid} submissions as $\mathcal{I}_t^*$. Then,  $\mathsf{S_A}$ aggregates all valid shares  with Gaussian noise by computing $[\sum_{i\in\mathcal{I}_t^*}\Delta\theta_t^i]_\mathsf{A}+[\xi_t^\mathsf{A}]_\mathbb{F}\in\mathbb{F}^d$. Similarly, $\mathsf{S_B}$ computes $[\sum_{i\in\mathcal{I}_t^*}\Delta\theta_t^i]_\mathsf{B}+[\xi_t^\mathsf{B}]_\mathbb{F}\in\mathbb{F}^d$. By exchanging the above  with each other, both servers obtain the sum (modulo $\mathsf{p}$) and convert it into a real vector $\sum\nolimits_{i\in\mathcal{I}_t^*}\Delta\theta_t^i+\xi_t^\mathsf{A}+\xi_t^\mathsf{B}\in\mathbb{R}^d$, which is utilized to update the global model parameter:
\begin{align}
\label{equ:global_model_update}
	\theta_{t+1} = \theta_t+\frac{\eta}{\sum_{i\in\mathcal{I}_t^*}p_i|D_i|}\left(\sum\nolimits_{i\in\mathcal{I}_t^*}\Delta\theta_t^i+\xi_t^\mathsf{A}+\xi_t^\mathsf{B}\right),
\end{align}
where $\eta$ is the learning rate of the global model.

\textbf{Comparisons.}
Our framework provides record-level DP and robustness against model poisoning attacks. The detailed analysis of privacy and robustness is provided in Sec. \ref{sec:privacy_robustness_analysis}. Comparing with the LDP-based solutions \cite{li2020differentially,pihur2018differentially} (where each client adds noise during the process of local training and sends the result to the server in clear) and DDP-based solutions \cite{truex2019hybrid,xu2019hybridalpha} (where each submission with noise augmentation is encrypted, and the central server only observes the \emph{aggregation} of noisy submissions), the proposed framework PRECAD has several advantages (also refer to Table \ref{tab:DP_compare}):

1) Better utility under the same privacy budget of DP. In the LDP-based solutions, the noisy submissions of all clients are aggregated by the server, thus the global model is updated with accumulated noise, where the variance is proportional to the number of participating clients. Although the DDP-based solutions can reduce the noise during local training (due to encrypted submissions),  the  variance of added noise is proportional to the ratio between the number of all clients and the minimum number of non-colluding clients (the accurate value is difficult to obtain in real-world applications). Furthermore, they  reduce to the LDP-based solutions in the worst case when all other clients collude to refer the victim's data. However, the noise in PRECAD only accumulates twice (by $\mathsf{S_A}$ and $\mathsf{S_B}$) and is independent of the number of non-colluding clients. 

2) Effective defense against poisoning attacks. In FL, it is not easy to perform anomaly detection because the server cannot access clients' private datasets. This task is even more challenging with DP,  since the random noise added by clients decreases the distributional distance between normal and abnormal submissions (thus making it harder to distinguish them). Therefore, both LDP-based and DDP-based solutions are vulnerable to malicious submissions, where the LDP-based solutions require more noise under the same privacy guarantee, and the encryption of submissions in DDP-based solutions limits the capability of anomaly detection. In contrast, the secure validation step in PRECAD (which learns nothing about a benign client) can completely bound the norm of malicious submissions at client-level (those who violate this constraint will be detected), and the noise added by servers increases the uncertainty when the attackers attempt to  modify the global model.

\subsection{Local Model Update}
\label{sec:local_update}

\begin{algorithm}[t!]
    \small
	\caption{Local update of client $\mathsf{C}_i$ at $t$-th iteration}
	\begin{algorithmic}[1]
		\REQUIRE Current global model $\theta_t$, local dataset $D_i$, record-level clipping bound $R$, client-level clipping bound $C$.
		\STATE Sample a set of indices $\mathcal{S}\subseteq\{1,\cdots,|D_i|\}$, where each index is sampled with probability $p_i$.
		\FOR{$s\in\mathcal{S}$}
		\STATE $g_s=\nabla_\theta \ell(\theta_t,z_s)$  \verb| //gradient per record|
		\STATE $\tilde{g}_s= g_s\cdot\min\{1,~R/\|g_s\|_2\}$  \verb| //record-level clipping|
		\ENDFOR
		\STATE $\Delta\tilde{\theta}_t^i=-\sum_{s\in\mathcal{S}}\tilde{g}_s$
		\STATE $\Delta\theta_t^i= \Delta\tilde{\theta}_t^i\cdot\min\{1,~C/\|\Delta\tilde{\theta}_t^i\|_2\}$  \verb| //client-level clipping|
		\STATE Convert $\Delta\theta_t^i\in\mathbb{R}^d$ to $[\Delta\theta_t^i]_\mathbb{F}\in\mathbb{F}^d$ and split it into two shares $[\Delta\theta_t^i]_\mathsf{A},[\Delta\theta_t^i]_\mathsf{B}\in\mathbb{F}^d$.
		\ENSURE Send $[\Delta\theta_t^i]_\mathsf{A}$ to server $\mathsf{S_A}$  and  send $[\Delta\theta_t^i]_\mathsf{B}$ to server $\mathsf{S_B}$.
	\end{algorithmic}
	\label{alg:local_updates}
\end{algorithm}

The protocol of local model update is shown in Algorithm \ref{alg:local_updates}. After receiving the current global model parameter $\theta_t$,  client $\mathsf{C}_i$ samples a subset of records from the local dataset, where each record is sampled with probability $p_i$. For each sampled record, the corresponding gradient is computed and then clipped with $R$ (\emph{record-level} clipping). Then, $\mathsf{C}_i$ can compute the sum of the \emph{negative} gradients (for gradient descent) and then clip the result by $C$ (\emph{client-level} clipping). We denote the clipped result as $\Delta\theta_t^i\in\mathbb{R}^d$, which is converted to its \emph{fixed-point} representation $[\Delta\theta_t^i]_\mathbb{F}\in\mathbb{F}^d$ and then split into two shares in $\mathbb{F}^d$, which are sent to two servers respectively.

The record-level clipping guarantees that by removing or adding one record from client $\mathsf{C}_i$'s dataset, the aggregation result at the server-side changes at most $R$ in terms of $\ell_2$-norm (i.e., bounded sensitivity), thus adding Gaussian noise on the aggregation provides record-level DP (shown in Sec. \ref{sec:privacy_analysis}). Similarly, the client-level clipping guarantees client-level DP on the noisy aggregation. Though client-level DP is not our privacy goal, we will show that (in Sec. \ref{sec:robustness_analysis}) it can be exploited to provide robustness of the learning process against model poisoning attacks. While the record-level clipping can automatically achieve some client-level clipping, the exact clipping bound depends on the number of sampled records (which is a random variable).  We use explicit client-level clipping to ensure the robustness is controllable.

For malicious clients who may deviate from the protocol execution and collude with other clients, the \emph{record-level} privacy is not guaranteed because they can opt out  record-level clipping. However, even under the presence of malicious clients, the aggregation in the server-side guarantees \emph{client-level} privacy because the submissions without the client-level clipping (i.e., the $\ell_2$ norm of real vector $\Delta\theta_t^i$ exceeds the bound $C$) will be rejected by the servers during secure validation (refer to Sec. \ref{sec:secure_validation}). Thus, malicious clients have to execute client-level clipping; otherwise, their submissions will be rejected and then have no influence on the aggregation result.

\subsection{Secure Validation}
\label{sec:secure_validation}
After a client $\mathsf{C}_i$ submits its shares of $\Delta\theta_t^i$, the servers need to securely verify whether $\|\Delta\theta_t^i\|_2\leqslant C$ holds. For the ease of presentation, we use $x$ to denote $\Delta\theta_t^i$ in this subsection. Specifically, server $\mathsf{S_A}$ and $\mathsf{S_B}$ hold shares $[x]_\mathsf{A}$ and $[x]_\mathsf{B}$ separately and want to verify whether the value of  $\mathsf{Valid}(x)\coloneqq\mathbbm{1}(x^\top x- C^2\leqslant0)$ is 1 without leaking any additional information, where $\mathbbm{1}(\cdot)$ denotes the indicator function and  $C^2\in\mathbb{F}$ is the fixed-point representation of the squared client-level clipping bound (which is public). To do so, we build upon the secure multiplication technique of Beaver \cite{beaver1991efficient}, which computes the multiplication of two secret-shared numbers (refer to Appendix \ref{apx:Beaver_MPC} for implementation details). The original Beaver's protocol is used to compute the shares of multiplication result of two private numbers, and we can extend it to the case of the inner product $x^\top x$ as follows.

We assume that $\mathsf{S_A}$ and $\mathsf{S_B}$ have access to a sufficient number of random one-time-use shares of $a\in\mathbb{F}^d$ and $r\in\mathbb{F}$ with the constraint $a^\top a=r$, where $a$ and $r$ correspond to Beaver's triples in the original Beaver's multiplication protocol \cite{beaver1991efficient}. Similar to Beaver's triples, the random shares of $a$ and $r$ can be provided by a trusted third party (TTP), or are generated off-line via cryptography techniques, such as additive homomorphic encryption \cite{keller2018overdrive} or oblivious transfer \cite{keller2016mascot}. Since their security and performance have been demonstrated, we assume these random shares are ready for use at the initial step of our protocol. To securely compute the result of $\mathsf{Valid}(x)$, the two severs implement the following steps:

Step 1. Server $\mathsf{S_A}$ computes $[b]_\mathsf{A}=[x]_\mathsf{A}-[a]_\mathsf{A}$, and server $\mathsf{S_B}$ computes $[b]_\mathsf{B}=[x]_\mathsf{B}-[a]_\mathsf{B}$. Then, they exchange the value of $[b]_\mathsf{A}$ and $[b]_\mathsf{B}$ with each other, and both of them hold the value of the vector $b=x-a\in\mathbb{F}^d$.

Step 2. Server $\mathsf{S_A}$ computes $y_\mathsf{A} = [r]_\mathsf{A}+2b^\top[a]_\mathsf{A}+(b^\top b-[C^2]_\mathbb{F})/2$, and server $\mathsf{S_B}$ computes $y_\mathsf{B} = [r]_\mathsf{B}+2b^\top[a]_\mathsf{B}+(b^\top b-[C^2]_\mathbb{F})/2$, where the addition and division are computed in the field $\mathbb{F}$ (or $\mathbb{F}^d$ for vectors). Now, the servers hold the shares of $y$, which essentially has the following representation (which is taken in the field $\mathbb{F}$)
\begin{align*}
    y&=y_\mathsf{A} + y_\mathsf{B} \\
    &= [r]_\mathsf{A} + [r]_\mathsf{B} + 2b^\top([a]_\mathsf{A}+[a]_\mathsf{B})+(b^\top b-C^2)\\
    &= a^\top a + 2b^\top a +b^\top b-C^2\\
    &= (x-b)^\top (x-b) + 2b^\top (x-b) +b^\top b-C^2\\
    &=x^\top x-C^2,
\end{align*}
which indicates $\mathsf{Valid}(x)=\mathbbm{1}(x^\top x- C^2\leqslant0)=\mathbbm{1}(y\leqslant0)$. 

Step 3. The servers can compute the shares $[\mathbbm{1}(y\leqslant 0)]_\mathsf{A}$ and $[\mathbbm{1}(y\leqslant 0)]_\mathsf{B}$ via the comparison gate (refer to \cite{knott2021crypten} for more details). Finally, by exchanging the above two shares with each other, both servers can compute the value $\mathsf{Valid}(x)=\mathbbm{1}(y\leqslant0)=[\mathbbm{1}(y\leqslant 0)]_\mathsf{A}+[\mathbbm{1}(y\leqslant 0)]_\mathsf{B}$. Finally, the submission $x=\Delta\theta_t^i$ will be accepted if $\mathsf{Valid}(x)=1$, or be rejected otherwise.

Note that in practice, the fix-point representation might incur a very small difference between the original real value $x^\top x$ and the result computed in the field $\mathbb{F}$, which could make the protocol mistakenly reject a valid submission. To address this problem, we can instead use a slightly larger $C$ (such as $C\leftarrow C+10^{-5}$) in secure validation, while the clients are still required to clip their updates with the original value of $C$.

\subsection{Security Analysis}

We discuss the correctness and security (soundness and zero-knowledge) properties of secure aggregation and secure validation in PRECAD as follows. 

1) Correctness. If client $\mathsf{C}_i$ is honest (on clipping $\Delta\theta_t^i$ with bound $C$), the servers will always accept $\Delta\theta_t^i$. The correctness of the scheme follows by construction.

2) Soundness. In our scenario, soundness means a malicious client, who does not clip the norm of update with bound $C$, will be detected in secure validation (except negligible probability due to the fix-point representation). Specifically, server $\mathsf{S_A}$ and $\mathsf{S_B}$ check each client' submission $\Delta\theta_t^i$ and assign the result to $\mathsf{Valid}(\Delta\theta_t^i)$, which concludes that the servers either accept $\Delta\theta_t^i$ (when $\mathsf{Valid}(\Delta\theta_t^i)=1$) or reject $\Delta\theta_t^i$ (when $\mathsf{Valid}(\Delta\theta_t^i)=0$). Thus, any malicious client must either submit a well-formed submission or be treated as invalid.

3) Zero-Knowledge. For \emph{secure aggregation}, the random splitting of secret sharing guarantees that each server gains no information about a client's submission except the result of aggregation, where the information leakage from the aggregation result is bounded and quantified via DP (refer to Sec. \ref{sec:privacy_analysis}). For \emph{secure validation}, the two servers implement Beaver's multiplication, where Beaver's analysis \cite{beaver1991efficient} guarantees that the multiplication gate leaks no information to the servers. Note that the secure addition and secure comparison are executed by each server locally without communicating to each other, thus these steps leak no information as well. Therefore, the secure validation leaks no information except the validation result, which is always valid for benign clients, thus even the validation result leaks no information of \emph{benign} clients. Finally, the whole protocol is unconditionally secure  because each cryptographic primitive is unconditionally secure against adversaries with unbounded computation power, and they can be securely composed \cite{goldreich2009foundations}.

\section{Privacy and Robustness Analysis}
\label{sec:privacy_robustness_analysis}

Our privacy-preserving FL framework in Sec. \ref{sec:framework} \emph{simulates} CDP under the assumption that the two servers are honest-but-curious and non-colluding. Thus, PRECAD achieves SIM-CDP \cite{mironov2009computational} in FL. In the rest of this paper, we use DP to indicate SIM-CDP for the ease of presentation. Recall that PRECAD utilizes two-levels (both record-level and client-level) of clipping, where record-level clipping is guaranteed for benign clients, and client-level clipping is guaranteed for all clients (including malicious clients) because the servers perform secure validation. Therefore, it provides record-level DP for \emph{benign clients} and client-level DP for \emph{all clients}, where the former is our privacy goal, and the latter can be shown to provide robustness against malicious clients who implement model poisoning attacks.

\subsection{Record-level Differential Privacy}
\label{sec:privacy_analysis}

Without loss of generality, we assume the attacker wants to infer one record of a benign victim client $\mathsf{C}_i$. Recall that the attacker corrupts at most one server and any number of clients (except the victim client $\mathsf{C}_i$), where all corrupted parties are assumed as honest-but-curious (a.k.a. semi-honest).  Note that PRECAD provides record-level privacy for all \emph{benign} clients who implement the protocol honestly, while the privacy of \emph{malicious} clients (who may deviate from the protocol) is not guaranteed. Since corrupting one or neither of the two servers have different privacy guarantees, our results have two cases, which are shown in the following theorem (the involved notations are summarized in Table \ref{tab:notations}).

\begin{theorem}[Privacy Analysis]
\label{thm:privacy_analysis}
Denote the set of parties corrupted by the attacker (for privacy breach purpose) as $\mathcal{A}$. Then, Algorithm \ref{alg:framework} satisfies record-level $(\epsilon,\delta_i)$-DP for a benign client $\mathsf{C}_i$ (i.e., $\mathsf{C}_i\notin\mathcal{A}$)  with any $\epsilon\geqslant0$ and 
\begin{align}
    \label{equ:delta_i}
	\delta_i(\epsilon)=\Phi\left(-\frac{\epsilon}{\mu_i}+\frac{\mu_i}{2}\right)-e^\epsilon\cdot\Phi\left(-\frac{\epsilon}{\mu_i}-\frac{\mu_i}{2}\right),
\end{align}
where $\Phi(\cdot)$ denotes the cumulative distribution function (CDF) of standard normal distribution, and $\mu_i$ is defined by 
\begin{align}
    \label{equ:mu_i}
    \mu_i=
    \begin{cases}
    p_i\sqrt{T_i(e^{1/\sigma^2}-1)}, & \text{ if } \mathsf{S_A}\in\mathcal{A} \text{ or } \mathsf{S_B}\in\mathcal{A} \\
    qp_i\sqrt{T(e^{1/(2\sigma^2)}-1)}, & \text{ if } \mathsf{S_A}, \mathsf{S_B}\notin\mathcal{A}
    \end{cases}
\end{align}
\end{theorem}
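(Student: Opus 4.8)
The plan is to establish record-level DP via an $f$-DP / Gaussian-DP (GDP) accounting argument, using the tools cited in Appendix~\ref{apx:GDP} (Lemma~\ref{lem:Gaussian_mechanism_GDP}). The target formula for $\delta_i(\epsilon)$ is exactly the $(\epsilon,\delta)$ trade-off curve of a $\mu_i$-GDP mechanism (i.e., the privacy profile of $\mathcal{N}(0,1)$ vs.\ $\mathcal{N}(\mu_i,1)$), so it suffices to show that, for a fixed benign client $\mathsf{C}_i$, Algorithm~\ref{alg:framework} as seen by the adversary $\mathcal{A}$ is $\mu_i$-GDP with the stated $\mu_i$. First I would fix the adversary's view: in the case $\mathsf{S_A}\in\mathcal{A}$ or $\mathsf{S_B}\in\mathcal{A}$, the corrupted server sees one share of every client's update plus its own noise draw, which is equivalent (by the information-theoretic security of additive secret sharing and the fact that corrupted clients' submissions are independent of $\mathsf{C}_i$'s record) to seeing $\sum_{j\in\mathcal{I}_t^*}\Delta\theta_t^j + \xi_t^\mathsf{B}$ — i.e.\ only \emph{one} server's noise protects $\mathsf{C}_i$, giving effective noise scale $R\sigma$ on a query of record-level sensitivity $R$, hence per-iteration noise multiplier $\sigma$. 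In the case $\mathsf{S_A},\mathsf{S_B}\notin\mathcal{A}$, the adversary only sees the final aggregate with \emph{both} noise terms $\xi_t^\mathsf{A}+\xi_t^\mathsf{B}\sim\mathcal{N}(0,2(R\sigma)^2\mathbf{I})$, doubling the variance and improving the effective noise multiplier to $\sqrt{2}\sigma$.

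Next I would account for the two layers of subsampling and the composition over iterations. Client $\mathsf{C}_i$ participates only when selected (probability $q$ per iteration), and within a participating iteration it subsamples each record with probability $p_i$; changing one record only changes $\mathsf{C}_i$'s contribution when that record is in the minibatch. The standard route is: (i) a single DP-SGD-style step on $\mathsf{C}_i$'s data with Poisson subsampling rate $p_i$ and Gaussian noise multiplier $\sigma$ (resp.\ $\sqrt 2\sigma$) satisfies a known privacy guarantee; (ii) apply the subsampling amplification and tight composition theorem for GDP / Rényi-DP. The cleanest way to land exactly on the closed forms in \eqref{equ:mu_i} is through Rényi DP: one Poisson-subsampled Gaussian step has RDP roughly $\frac{p_i^2 \alpha}{2\sigma^2}$-ish in the relevant regime, and the quantities $p_i\sqrt{T_i(e^{1/\sigma^2}-1)}$ and $qp_i\sqrt{T(e^{1/(2\sigma^2)}-1)}$ are precisely the GDP parameters obtained by converting the composed subsampled-Gaussian RDP bound (the $e^{1/\sigma^2}-1$ factor is the tell-tale signature of the moments-accountant / RDP-of-subsampled-Gaussian bound, and $T_i$ denotes the number of iterations in which $\mathsf{C}_i$ is selected — which in the server-corrupted case can be taken adversarially, whereas in the honest-server case it is randomized, contributing the extra factor $q$ and converting $T_i \to qT$ in expectation). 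So I would: bound the per-step RDP of the subsampled Gaussian, compose linearly over the $T_i$ (resp.\ $T$) steps, convert the resulting RDP curve to GDP to read off $\mu_i$, and finally invoke the $\mu$-GDP $\Rightarrow$ $(\epsilon,\delta)$ conversion $\delta(\epsilon)=\Phi(-\epsilon/\mu+\mu/2)-e^\epsilon\Phi(-\epsilon/\mu-\mu/2)$ to obtain \eqref{equ:delta_i}.

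The main obstacle I anticipate is \textbf{rigorously reducing the adversary's view to a clean Gaussian mechanism}, i.e.\ arguing that partial corruption (one server plus a colluding set of non-victim clients) leaks nothing beyond the appropriately-noised aggregate of $\mathsf{C}_i$'s update. This requires a simulation argument: the shares held by the corrupted server are uniform and independent of $\mathsf{C}_i$'s secret, the secure-validation transcript is simulatable (it always outputs ``valid'' for the benign $\mathsf{C}_i$ and reveals nothing else, per the zero-knowledge discussion in Sec.~\ref{sec:secure_validation}), and the corrupted clients' inputs are chosen independently of $\mathsf{C}_i$'s record; putting these together, a post-processing argument confines all record-level leakage to the single random variable $\sum_{j}\Delta\theta_t^j+\xi$. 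A secondary technical wrinkle is the difference between $T_i$ (random, a function of the client-selection coins) in the server-corrupted branch versus handling the client-selection randomness as an additional amplification source in the honest-server branch — care is needed so that the two cases of \eqref{equ:mu_i} are both obtained from the same underlying per-step bound without double-counting the $q$ factor. The fixed-point / finite-field embedding introduces only a negligible perturbation and I would dispatch it with a remark (as the paper does at the end of Sec.~\ref{sec:secure_validation}) rather than tracking it through the analysis.
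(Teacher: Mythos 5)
Your overall plan coincides with the paper's proof: fix the adversary's view in each of the two corruption cases, argue that all record-level leakage about $\mathsf{C}_i$ is confined to a noisy aggregate with effective noise multiplier $\sigma$ (one honest server's noise survives) or $\sqrt{2}\sigma$ (both noises survive), identify the effective subsampling probability as $p_i$ over the $T_i$ participating iterations in the server-corrupted case versus $qp_i$ over all $T$ iterations in the honest-server case (since only then is the client-selection coin hidden from the adversary), compose, and convert the resulting $\mu_i$-GDP guarantee to $(\epsilon,\delta)$-DP via the trade-off formula. The paper also explicitly checks, as you implicitly assume, that the client-level clipping can only shrink (never enlarge) a single record's influence, so the record-level sensitivity remains $R$.

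The one substantive deviation is your proposed route to the closed form via R\'enyi-DP composition. The expression $\mu = p\sqrt{T(e^{1/\sigma^2}-1)}$ is not obtained by composing per-step RDP bounds and converting; it is the asymptotic GDP parameter given directly by the privacy central limit theorem for the subsampled Gaussian mechanism (Lemma~\ref{lem:GDP_privacy_account}, from Bu et al.), which is what the paper invokes. An RDP accountant composed over $T$ steps and then converted to $(\epsilon,\delta)$ yields a different (and generally non-Gaussian) privacy profile, so following that route you would not land on \eqref{equ:mu_i} or on the exact form of $\delta_i(\epsilon)$ in \eqref{equ:delta_i}; you would also lose the fact that the guarantee is stated as \emph{asymptotic} (the CLT regime), which the paper's lemmas make explicit. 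Replacing your step (ii) with a direct application of Lemma~\ref{lem:GDP_privacy_account} — once with parameters $(p_i, T_i, \sigma)$ and once with $(qp_i, T, \sqrt{2}\sigma)$ — and then applying Lemma~\ref{lem:GDP_to_DP} closes this gap; the rest of your argument, including the simulation-based reduction of the adversary's view to the noised sum, matches the paper's reasoning.
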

\begin{proof}
    See Appendix \ref{apx:proof_thm_privacy_analysis}.
\end{proof}

In Theorem \ref{thm:privacy_analysis}, the number of corrupted semi-honest clients (except the victim client $\mathsf{C}_i$) does not affect the privacy analysis because the privacy leakage of all iterations are accounted (which corresponds to the worst case where the leaked information in all iteration are catched by the attacker). The parameter $\mu_i$ quantifies the indistinguishability of each record in-or-out of client $\mathsf{C}_i$'s dataset, where a smaller $\mu_i$ indicates a higher indistinguishability, thus provides stronger privacy guarantees (similar to the role of privacy budget). In \eqref{equ:delta_i}, $\delta_i(\epsilon)$ is a decreasing function w.r.t. the privacy budget $\epsilon$. It reflects the trade-off between the privacy budget $\epsilon$ and the small probability $\delta$ in $(\epsilon,\delta)$-DP.  Also, a smaller $\mu_i$ incurs a smaller value of $\delta$. Therefore, if we fix the value of function $\delta_i(\epsilon)$, the value of $\epsilon$ will be decreased with a decreasing $\mu_i$, which yields a stronger DP guarantee.

The two cases in \eqref{equ:mu_i}, i.e., whether the attacker corrupts one server or not, mainly differ in 1) the influence of the probability $q$ that each client is selected by the servers; and 2) the multiplier of the noise ($\sigma^2$ or $2\sigma^2$) that affect the quantification of DP guarantees. We explain the reasons below:

1) Since each client is randomly selected with probability $q$ in each iteration, and there are $T$ global iterations, the expected number of iterations that the victim $\mathsf{C}_i$ participates in is $qT$ (i.e., $\mathbb{E}[T_i]=qT$). Then, the factor $\sqrt{T_i}$ of $\mu_i$ in the first case can be approximated as $\sqrt{qT}$ (v.s. $q\sqrt{T}$ in the second case), where $\sqrt{qT}$ provides weaker privacy protection than $q\sqrt{T}$ because $\sqrt{qT}>q\sqrt{T}$ (note that $q\in[0,1]$). The intuition is that the privacy amplification from client-level sampling holds when the sampling result is a \emph{random variable} in the attacker's view (thus improves the randomness and privacy). However, in the first case, the attacker corrupts one server and can identify the iterations that the victim $\mathsf{C}_i$ actually participates in. Thus, the client-level sampling in this case only reduces the participating times, where only participated iterations leak $\mathsf{C}_i$'s privacy. In contrast, the attacker in the second case is uncertain about the participating iterations, then $\mathsf{C}_i$ enjoys more privacy benefit from the randomness caused by sampling probability $q$.  We note that although the random sampling/selection could be done via cryptography primitives, one of the servers ultimately will know the selection results.

2) Recall that both two servers add Gaussian noise with variance $(R\sigma)^2$ in aggregation. In the first case, no matter which server is corrupted by the attacker, only one of the Gaussian noise provides DP because the corrupted server can cancel out its noise from the aggregation. However, in the second case, both two Gaussian noises are valid against the attacker, thus the variance of the noise is $2(R\sigma)^2$ in the attacker's view. Note that the parameter $R$ in  $\mathcal{N}(0,(R\sigma)^2\cdot\mathbf{I})$ is finally cancelled out in \eqref{equ:mu_i} because the record-level clipping ensures the record-level sensitivity to be $R$. 

\begin{table}[t!]
	\footnotesize
	\centering
	\caption{Notations in Privacy/Robustness Analysis}
	\vspace{-3mm}
	\begin{tabular}{c|l}
	\hline
	& \multicolumn{1}{c}{Definition}\\
	\hline
	$\mathsf{S_A}, \mathsf{S_B}$ & Two non-colluding servers\\
	$\mathsf{C}_i$ & The $i$-th client (who is the victim of privacy breach)\\
	$T$ & The total number of global iterations  \\
	$T_i$ & The total number of iterations that client $\mathsf{C}_i$ participates in\\
	$q$ & The probability of each client being selected in each iteration\\
	$p_i$ & The probability of each record being sampled by client $\mathsf{C}_i$ \\
	$R$ & The maximum gradient norm  for each \emph{record} \\ 
	$C$ & The maximum local model update norm  for each \emph{client} \\
	$\sigma$ & The multiplier of the additive Gaussian noise $\mathcal{N}(0,(R\sigma)^2\cdot\mathbf{I})$\\
	$\Phi(\cdot)$ & The  CDF of standard normal distribution \\
	\hline
	\end{tabular}
	\label{tab:notations}
\end{table}

\subsection{Robustness against Poisoning Attacks}
\label{sec:robustness_analysis}

Recall that for model poisoning attacks, we assume the servers are trusted, and the attacker corrupts a set of \emph{malicious} clients $\mathcal{A}_K$ with group size $K$. The set of all \emph{benign} clients is denoted as $\mathcal{C}$,  and the set of all participating clients (including both benign and malicious ones) is denoted as $\mathcal{C}^\prime_K\coloneqq\mathcal{C}\cup\mathcal{A}_K$. Consider the randomized learning mechanism $\mathcal{M}$ (i.e., Algorithm \ref{alg:framework}) with the input to be the training dataset of $\mathcal{C}$ or $\mathcal{C}^\prime_K$, then  $\mathcal{M}(\mathcal{C}^\prime_K)$ and $\mathcal{M}(\mathcal{C})$ are two distributions of the final model parameter $\theta$ learned from the dataset with or without the participation of malicious clients $\mathcal{A}_K$. The robustness of PRECAD against such attacker focuses on a fixed record $z$ in the testing phase and a bounded loss function $\ell(\theta,z)\in[0,B]$ for any $\theta$ in the model parameter space. We denote the \emph{expected loss} over the random model parameter $\theta$ on distributions  $\mathcal{M}(\mathcal{C})$ and $\mathcal{M}(\mathcal{C}^\prime_K)$ as
\begin{align}
    \label{equ:expected_loss}
    \mathcal{L}=\mathbb{E}_{\theta\sim\mathcal{M}(\mathcal{C})}[\ell(\theta,z)],\quad
    \mathcal{L}^\prime_K=\mathbb{E}_{\theta\sim\mathcal{M}(\mathcal{C}^\prime_K)}[\ell(\theta,z)]
\end{align}
where $\mathcal{L}$ is the expected loss without attack, and $\mathcal{L}^\prime_K$ is the expected loss under the attack with $K$ malicious clients. The following theorem states that due to the participation of malicious clients $\mathcal{A}_K$, the attacked loss $\mathcal{L}^\prime_K$ would not be very far away from the unattacked loss $\mathcal{L}$ (refer to Table \ref{tab:notations} for the definitions of parameter notations).

\begin{theorem}[Robustness against model poisoning attacks] 
\label{thm:robustness_analysis}
	For the randomized mechanism $\mathcal{M}$ in Algorithm \ref{alg:framework},  the expected loss $\mathcal{L}^\prime_K$ defined in \eqref{equ:expected_loss} on the model $\theta$ with poisoning attack has the following upper-bound and lower-bound (w.r.t. the expected loss $\mathcal{L}$ on unattacked model):
	\begin{align}
    \label{equ:robust_upper_bound}
	\mathcal{L}^\prime_K&\leqslant~ \inf_{\epsilon\geqslant0}~e^\epsilon\cdot\mathcal{L}+ B\cdot\delta(\epsilon)  \\
	\label{equ:robust_lower_bound}
	\mathcal{L}^\prime_K&\geqslant~ \sup_{\epsilon\geqslant0}~e^{-\epsilon}\cdot(\mathcal{L}- B\cdot\delta(\epsilon) )
    \end{align}
    Recall that $[0,B]$ is the range of the loss function $\ell(\theta)$. The function $\delta(\epsilon)$ is defined by
	\begin{align}
	    \label{equ:delta_robustness}
	    \delta(\epsilon)=\Phi\left(-\frac{\epsilon}{K\mu}+\frac{K\mu}{2}\right)-e^\epsilon\cdot\Phi\left(-\frac{\epsilon}{K\mu}-\frac{K\mu}{2}\right),
	\end{align}
	where $\Phi(\cdot)$ is the CDF of standard normal distribution, and $\mu$ is computed by
	\begin{align}
	    \label{equ:mu_robustness}
	    \mu=q\sqrt{T(e^{1/\tilde{\sigma}^2}-1)}, \text{ where }
	    \tilde{\sigma}=\sqrt{2}\sigma R/C
	\end{align}
\end{theorem}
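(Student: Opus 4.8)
The plan is to derive the two-sided loss bound from a single ingredient: that the mechanism $\mathcal{M}$ of Algorithm \ref{alg:framework}, viewed as a randomized function of the set of participating clients, satisfies \emph{client-level} Gaussian DP, which by group privacy yields indistinguishability between $\mathcal{M}(\mathcal{C})$ and $\mathcal{M}(\mathcal{C}'_K)$ (these differ exactly by the $K$ malicious clients in $\mathcal{A}_K$). Concretely, I would first show $\mathcal{M}$ is $\mu$-GDP with respect to the addition/removal of a \emph{single} client, with $\mu=q\sqrt{T(e^{1/\tilde\sigma^2}-1)}$ and $\tilde\sigma=\sqrt{2}\,\sigma R/C$, then invoke the group-privacy scaling of GDP to conclude that $\mathcal{M}(\mathcal{C})$ and $\mathcal{M}(\mathcal{C}'_K)$ are $K\mu$-GDP, hence $(\epsilon,\delta(\epsilon))$-indistinguishable in both directions with $\delta(\epsilon)$ given by the GDP-to-$(\epsilon,\delta)$ conversion, which is exactly \eqref{equ:delta_robustness}.

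The client-level GDP claim reuses the accountant already used for Theorem \ref{thm:privacy_analysis}, with three substitutions reflecting the client-level view. First, the relevant sensitivity is the \emph{client-level} clipping bound $C$ rather than the record-level bound $R$; the soundness of the secure-validation protocol (Sec. \ref{sec:secure_validation}) guarantees $\|\Delta\theta_t^i\|_2\leqslant C$ for every accepted submission, malicious clients included, so one client perturbs the aggregate by at most $C$ in $\ell_2$-norm regardless of how the attacker behaves. Second, because the robustness setting assumes both servers are honest (hence effectively trusted for this analysis), \emph{both} Gaussian terms $\xi_t^\mathsf{A},\xi_t^\mathsf{B}\sim\mathcal{N}(0,(R\sigma)^2\mathbf{I})$ protect against the attacker, so the per-round noise has variance $2(R\sigma)^2$ and the effective noise multiplier is $\tilde\sigma=\sqrt{2}\,\sigma R/C$ (in contrast to the privacy analysis, where a corrupted server can subtract its own noise). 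Third, there is no record-level subsampling, only the client-selection rate $q$. With these changes each iteration is a $q$-subsampled Gaussian mechanism of multiplier $\tilde\sigma$; composing over $T$ iterations and applying the same CLT-type GDP bound used for Theorem \ref{thm:privacy_analysis} (Lemma \ref{lem:Gaussian_mechanism_GDP}) gives the single-client parameter $\mu=q\sqrt{T(e^{1/\tilde\sigma^2}-1)}$, and group privacy of $\mu$-GDP upgrades this to $K\mu$-GDP for the $K$-client difference, from which \eqref{equ:delta_robustness} follows by the standard conversion.

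It remains to pass from $(\epsilon,\delta(\epsilon))$-indistinguishability to the loss bounds, using the elementary ``expectation transfer'' fact: if $P,Q$ are distributions on the parameter space with $Q(S)\leqslant e^\epsilon P(S)+\delta$ for every event $S$, then for any $g$ with range $[0,B]$,
\begin{align*}
\mathbb{E}_{\theta\sim Q}[g(\theta)]&=\int_0^B Q(g(\theta)>t)\,\mathrm{d}t\\
&\leqslant\int_0^B\bigl(e^\epsilon P(g(\theta)>t)+\delta\bigr)\,\mathrm{d}t\\
&=e^\epsilon\,\mathbb{E}_{\theta\sim P}[g(\theta)]+B\delta.
\end{align*}
Applying this with $P=\mathcal{M}(\mathcal{C})$, $Q=\mathcal{M}(\mathcal{C}'_K)$, and $g(\theta)=\ell(\theta,z)\in[0,B]$ gives $\mathcal{L}'_K\leqslant e^\epsilon\mathcal{L}+B\delta(\epsilon)$ for every $\epsilon\geqslant0$; taking the infimum over $\epsilon$ yields \eqref{equ:robust_upper_bound}. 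Swapping the roles of $P$ and $Q$ (valid since $K\mu$-GDP is symmetric, so the indistinguishability holds in both directions with the same $\delta(\epsilon)$) gives $\mathcal{L}\leqslant e^\epsilon\mathcal{L}'_K+B\delta(\epsilon)$, i.e. $\mathcal{L}'_K\geqslant e^{-\epsilon}(\mathcal{L}-B\delta(\epsilon))$, and the supremum over $\epsilon$ yields \eqref{equ:robust_lower_bound}. The main obstacle is the client-level accounting rather than this transfer step: one must be careful that secure-validation soundness really pins the client-level sensitivity to $C$ against protocol-deviating adversaries, that the $\sqrt{2}$ factor in $\tilde\sigma$ is justified because both server noises are effective here, and that the subsampled-Gaussian GDP parameter composes linearly in the group size $K$; once $\mathcal{M}(\mathcal{C})$ and $\mathcal{M}(\mathcal{C}'_K)$ are shown to be $K\mu$-GDP, the rest is routine.
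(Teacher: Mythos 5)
Your proposal is correct and follows essentially the same route as the paper: establish client-level $\mu$-GDP via the $C$-sensitivity enforced by secure validation, the combined server noise with effective multiplier $\tilde\sigma=\sqrt{2}\sigma R/C$, and $q$-subsampling over $T$ rounds; upgrade to $K\mu$-GDP by group privacy; convert to $(\epsilon,\delta(\epsilon))$-DP; and transfer to expected loss via the layer-cake integral in both directions. The only nit is that the CLT-type accountant you invoke is Lemma \ref{lem:GDP_privacy_account}, not Lemma \ref{lem:Gaussian_mechanism_GDP}, though your description of it is accurate.
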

\begin{proof}
    (Sketch) We first show that releasing the final model $\theta$ satisfies client-level $(\epsilon,\delta(\epsilon))$-DP for any $\epsilon\geqslant0$. Then, by leveraging the property of DP, we can bound the expected loss on the poisoned model. The factor $K$ is introduced in the group privacy of DP. Refer to Appendix \ref{apx:proof_thm_robustness_analysis} for the full proof.
\end{proof}

The robustness analysis in Theorem \ref{thm:robustness_analysis} does not necessarily depend on the details of the attack implementation, such as what  auxiliary information the attacker has, how the local poisoning model is trained, and the value of scaling factor to make a larger impact of the poisoning attack. The only assumption on the attacker is the value of $K$ (i.e., the number of malicious clients). We can observe that the robustness guarantee is stronger when $B$ and $K\mu$ are small, where a small $B$ makes the range of loss small, and a small $K$ limits the controlling capability of the attacker. From \eqref{equ:mu_robustness}, $\mu$ would be small when: $\sigma R$ is large, $C$ is small, and $q\sqrt{T}$ is small, where a larger $\sigma R$ introduces more noise, and smaller $C,q,T$ restrict the manipulation capability of the attacker. Note that such changes might also impact the value of $\mathcal{L}$ and usually the main task becomes less accurate (since more noise is introduced or the global model learns less from benign clients' data), which reflects the trade-off between robustness and utility.

The function $\delta(\epsilon)$ in \eqref{equ:delta_robustness} has a similar form as in \eqref{equ:delta_i}, but differs in the value of $\mu$ and introduces an additional factor $K$. Compared with \eqref{equ:mu_i}, the constant $\mu$ in \eqref{equ:mu_robustness} no longer depends on the record-level sampling probability $p_i$ because we quantify the bound on each client's contribution (where $p_i$ only influences the contribution of one record). Note that malicious clients can opt out \emph{record-level} clipping, but \emph{client-level} clipping can be securely verified by the servers without leaking any additional information of benign clients. The ratio $R/C$ occurs in \eqref{equ:mu_robustness} due to the different sensitivity on record-level and client-level. The factor $K$ is introduced in \eqref{equ:delta_robustness} via group privacy of DP because the neighboring datasets under the poisoning attack differ in a group of (malicious) clients with size $K$. A larger $K$ leads to a larger $\delta(\epsilon)$, i.e., the privacy guarantee drops with the size of the group, because the distance of neighboring datasets in group privacy is $K$ (as versus distance of 1 in the original DP).

\section{Evaluation}

In this section, we demonstrate the enhancement of PRECAD on privacy-utility tradeoff and poisoning robustness via experimental results on MNIST \cite{lecun1998mnist} and CIFAR-10 \cite{krizhevsky2009learning} datasets. All experiments are developed in Python. The experimental settings of FL mainly follow the previous work \cite{zheng2021federated}, and the cryptographic protocols are implemented under CrypTen library \cite{knott2021crypten}.

\subsection{Experimental Setup}
\label{sec:experiment_setup}

\textbf{Baselines.} We use  1) non-private and 2) LDP-based solution in FL as the baseline approaches. In non-private setting, all clients neither implement clipping nor noise augmentation, and the server just aggregates the submissions and then updates the global model. In LDP setting, benign clients implement record-level clipping (with norm clipping bound $R$) and Gaussian noise augmentation (with standard deviation $\sigma R$) on the summation of multiple record gradients, then send the noisy update to the server in plaintext. Note that the privacy accountant (i.e., the value of privacy budget $\epsilon$) of LDP is the same as the privacy analysis of PRECAD in the threat model with one corrupted server (i.e., the first case in Theorem \ref{thm:privacy_analysis}). We do not include CDP as the baseline because CDP assumes a trusted server, and our scheme PRECAD essentially is a simulation of CDP. And DDP is not included because DDP will reduce to LDP in the worst case (i.e., when $\tau=1$ in Table \ref{tab:DP_compare}).

\textbf{Datasets (non-IID).} We use two datasets for our experiments: MNIST \cite{lecun1998mnist} and CIFAR-10 \cite{krizhevsky2009learning}, where the default value of the number of total clients is $n=100$.  To simulate the heterogeneous data distributions, we make non-i.i.d. partitions of the datasets, which is a  similar setup as \cite{zheng2021federated} and is described below:

1) Non-IID MNIST: The MNIST dataset contains 60,000 training images and 10,000 testing images of 10 classes. There are 100 clients, each holds 600 training images. We sort the training data by digit label and evenly divide it into 400 shards. Each client is assigned four random shards of the data, so that most of the clients have examples of three or four digits.

2) Non-IID CIFAR-10: The CIFAR-10 dataset contains 50,000 training images
and 10,000 test images of 10 classes. There are 100 clients, each holds 500 training images. We sample the training images for each client using a Dirichlet distribution with hyperparameter 0.5.

\textbf{Evaluation Metrics.}
We consider \emph{main task accuracy} and \emph{backdoor accuracy} (if applicable) as the evaluation metrics. The former is measured on the original test dataset (without backdoor images), while the latter is measured on a modified version of the test dataset, where \emph{a lower backdoor accuracy indicates a stronger robustness} against backdoor attacks. The detailed implementation of backdoor attacks and definition of backdoor accuracy are described in Appendix \ref{apx:experimental_details}.

\textbf{Model Architecture.}
For MNIST dataset, we use the CNN model from PyTorch example\footnote{https://github.com/pytorch/opacus}. For CIFAR-10 dataset, we use the CNN model from the TensorFlow tutorial\footnote{https://www.tensorflow.org/tutorials/images/cnn}, like the previous works \cite{zheng2021federated,mcmahan2018learning}. The hyperparameters for training are described in Appendix \ref{apx:experimental_details}.

\subsection{Privacy-Utility Tradeoff}

\begin{figure}[!t]
    \centering
    \includegraphics[width=1.6in]{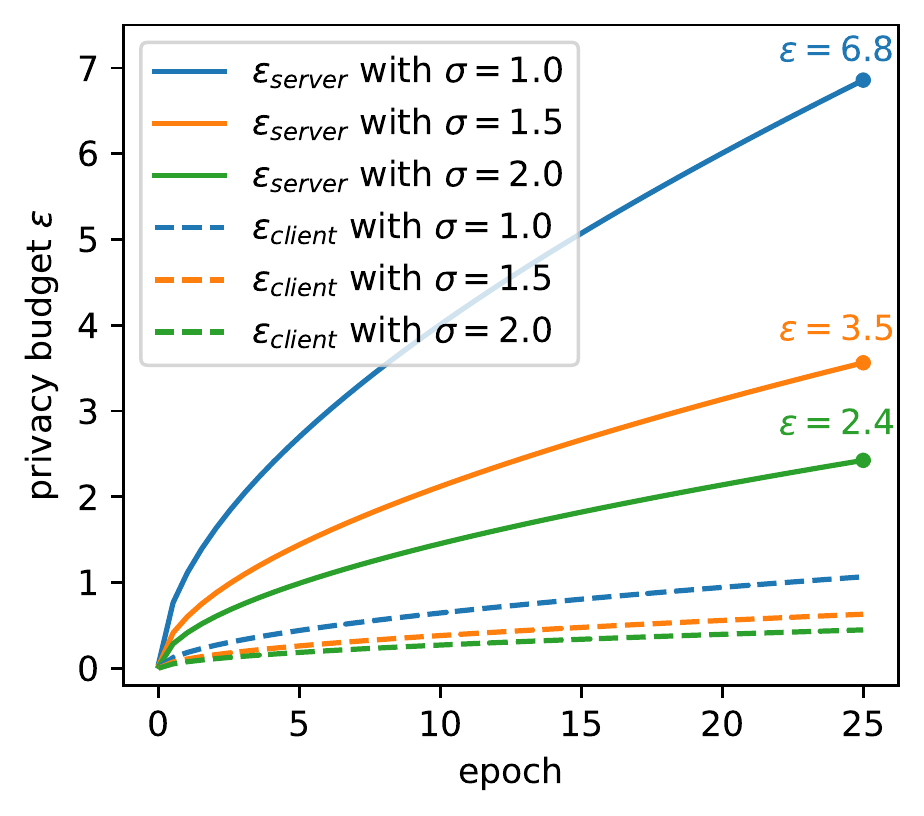}
    \includegraphics[width=1.7in]{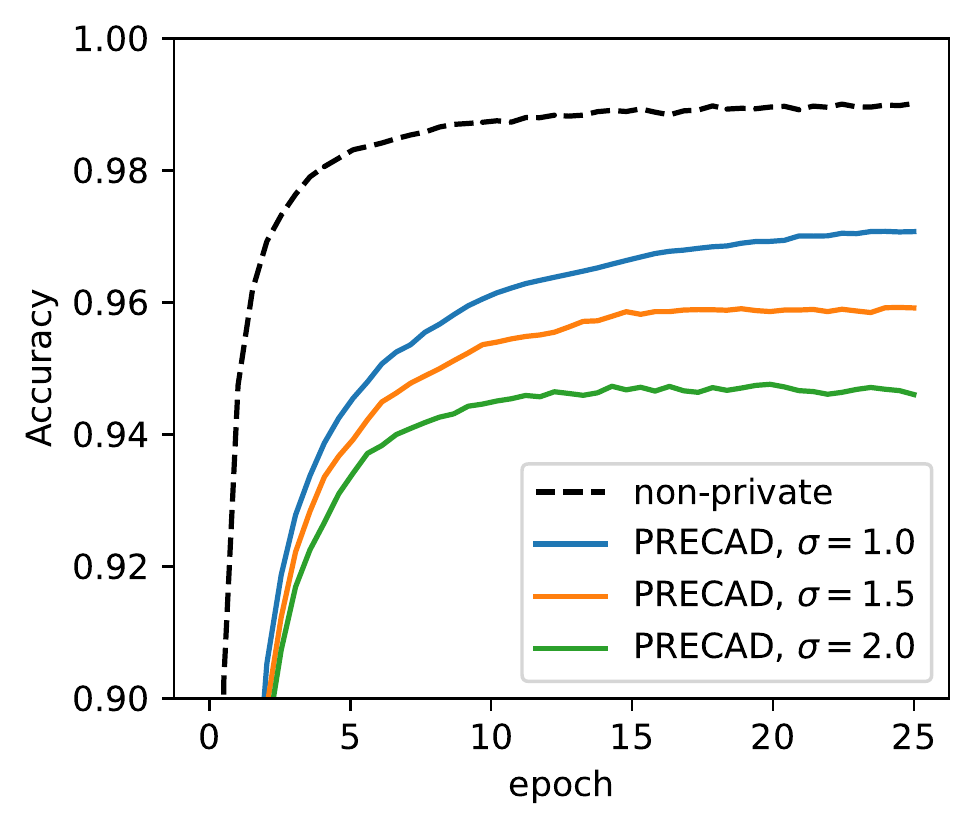}
    \vspace{-6mm}
    \caption{Privacy budget curve (left) and accuracy curve (right) of PRECAD w.r.t. the epoch, where one epoch equals to 200 global iterations because we set the sampling ratios as $p_i=0.05$ (for record) and $q=0.1$ (for client), thus $200\cdot p_i\cdot q=1$. The value of  $\epsilon_{\text{server}}$ (solid lines) and $\epsilon_{\text{client}}$ (dashed lines) represent the privacy budget $\epsilon$ (under fixed $\delta=10^{-5}$) in the two cases where one of the two servers is corrupted by the attacker or not. In the first case, PRECAD provides less privacy guarantee (thus a larger value of $\epsilon_{\text{server}}$) than in the second case with privacy budget $\epsilon_{\text{client}}$. } 
    \label{fig:privacy_curve}
\end{figure}

\begin{figure}[!t]
    \centering
    \includegraphics[width=1.65in]{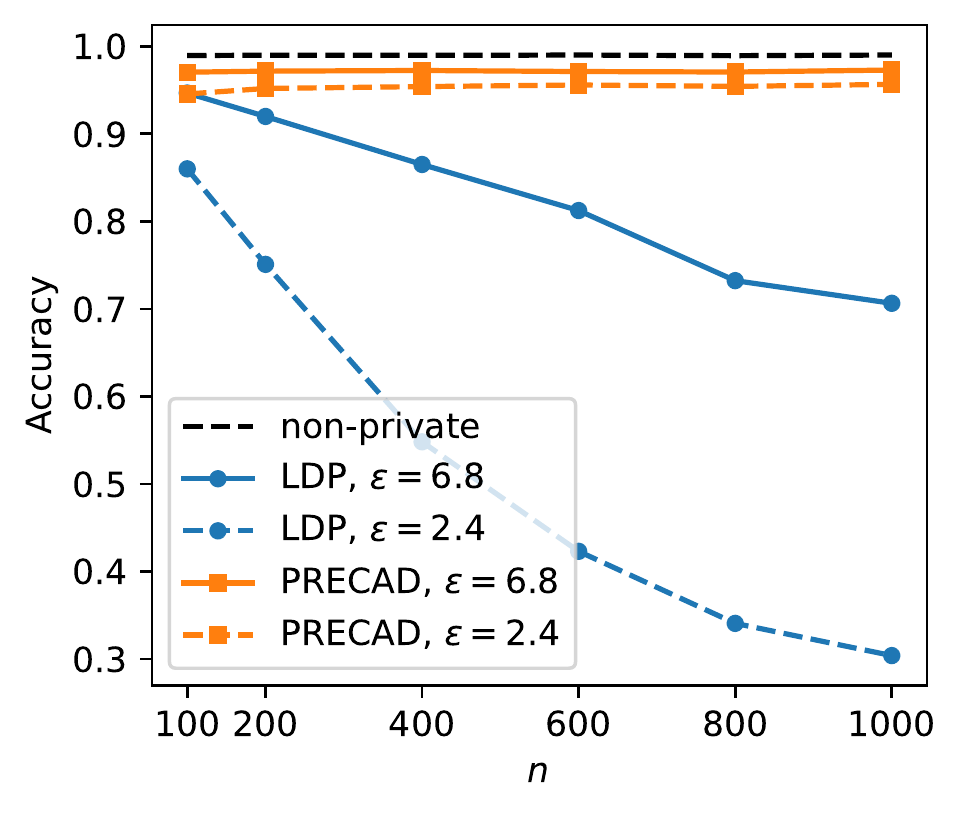}
    \includegraphics[width=1.65in]{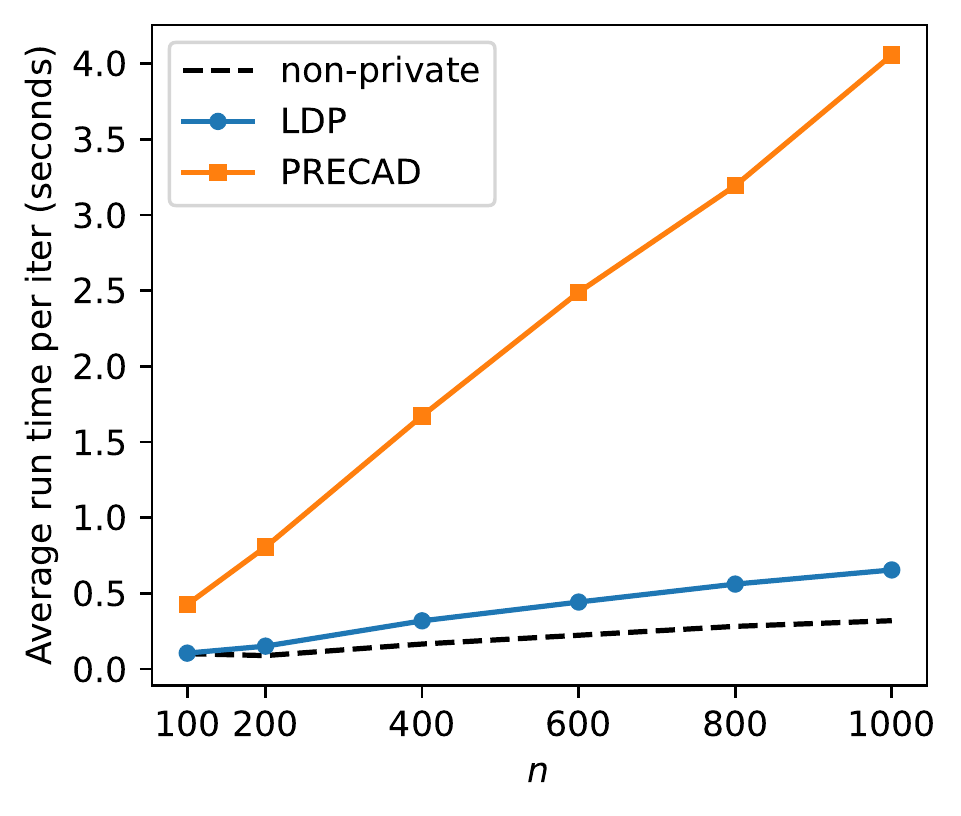}
    \vspace{-6mm}
    \caption*{(a) MNIST Dataset}
    \vspace{2mm}
    \includegraphics[width=1.65in]{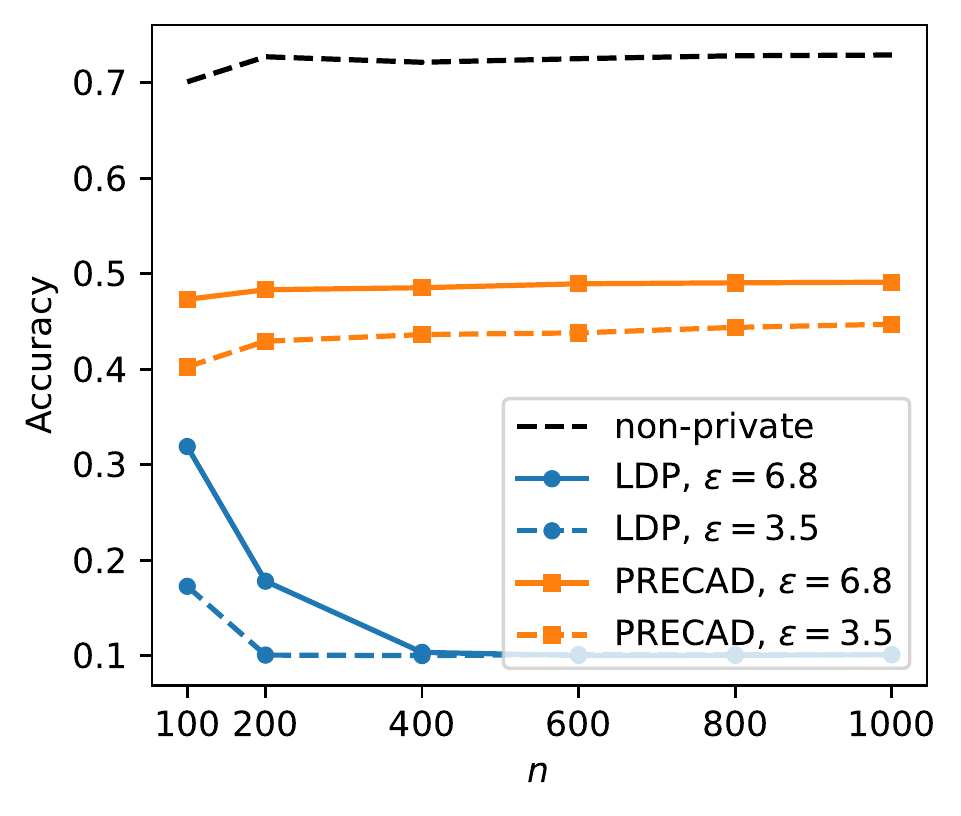}
    \includegraphics[width=1.65in]{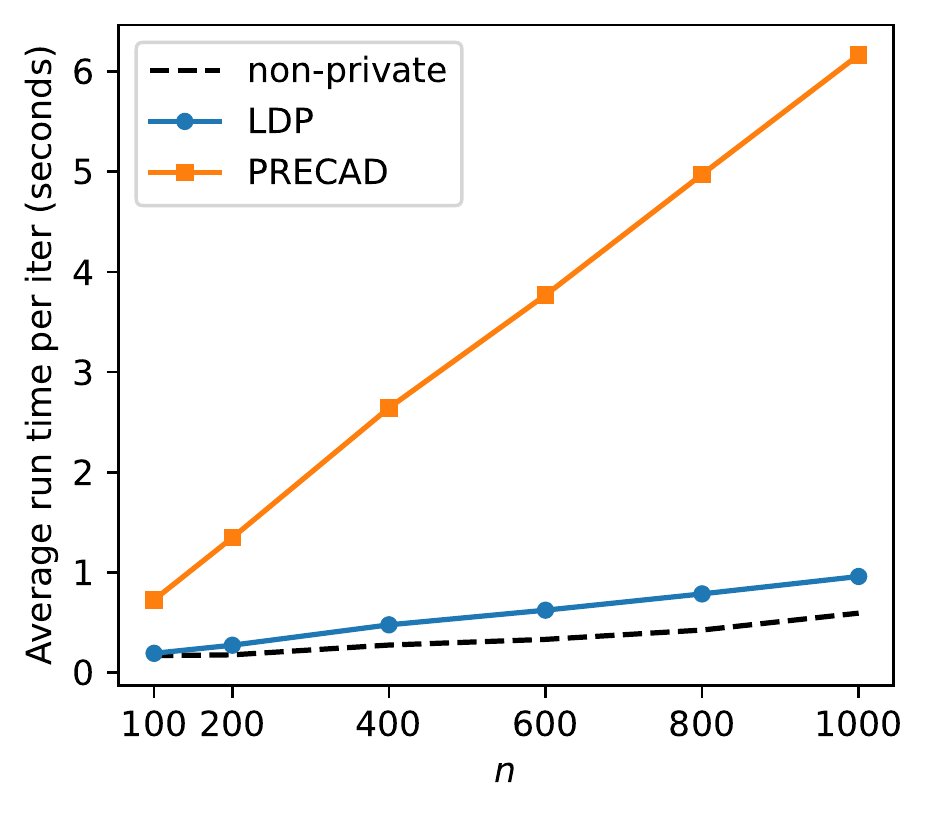}
    \vspace{-6mm}
    \caption*{(b) CIFAR-10 Dataset}
    \caption{Comparison of accuracy (left) and efficiency (right) under different $n$ (total number of clients), where efficiency is quantified by the average running time per iteration. Each client participates in each iteration with probability $q=0.1$.}
    \vspace{-2mm}
    \label{fig:vary_n}
\end{figure}

In this subsection, we show the privacy-utility tradeoff when there are no backdoor attacks. In this scenario, we disable the \emph{client-level} clipping in the client-side (Line-7 in Algorithm \ref{alg:local_updates}) and secure validation in the server-side (Line-4 in Algorithm \ref{alg:framework}) of PRECAD for a fair comparison. Note that the record-level clipping, additive secret sharing scheme, and Gaussian noise augmentation are always maintained to achieve the record-level DP as discussed in Sec. \ref{sec:privacy_analysis}.

\textbf{Privacy and Accuracy Curves of PRECAD.}
Figure \ref{fig:privacy_curve} shows the privacy cost  and the accuracy (for MNIST dataset) of PRECAD with respect to the epoch/iteration. In each fixed epoch, we can observe the privacy-utility tradeoff: a larger noise leads to more private guarantee (i.e., a smaller privacy budget $\epsilon$) but also leads to less accuracy. In Figure \ref{fig:privacy_curve}, the privacy budget against a stronger attacker (who corrupts one server and any number of clients) is always larger than against a weaker attacker (who only corrupts clients). In the following, we only focus on the worst-case privacy budget in the two cases, where $\epsilon=6.8, 3.5, 2.4$ correspond to the noise multiplier $\sigma=1.0,1.5,2.0$ when the global model is trained with 25 epochs (i.e., $T=5000$) under default hyperparameters $q=0.1,p_i=0.05~(i=1,\cdots,n)$. Note that the privacy curve (i.e., privacy budget v.s. epochs) for CIFAR-10 is the same as MNIST because the hyperparameters are the same.

\textbf{Accuracy Comparison.} Figure \ref{fig:vary_n} (left) shows how the total number of clients (i.e., the value of $n$) affects the accuracy. We can observe that increasing the value of $n$ has no impact on the accuracy of non-private setting and PRECAD, but decreases the accuracy of LDP setting significantly, because more noise is aggregated when more clients participate in each iteration. Note that for CIFAR-10 dataset, the accuracy of non-private and PRECAD when $n=100$ is slightly lower than the cases when $n\geqslant200$. It is because the CIFAR-10 training examples of each client is sampled with random drawing (v.s. MNIST dataset is jointly divided), thus a small $n$ makes partial of training examples may not be included by any client.

\textbf{Efficiency Comparison.}
Figure \ref{fig:vary_n} (right) compares the computation  efficiency of different protocols, quantified by the average run time per iteration. With an increased $n$, the run time is increased for all approaches because more clients train their local models (we train local models in sequential due to the GPU memory limit). We can observe that PRECAD's run time is approximately 7$\times$  of the LDP-based solution, due to the additional computation introduced by cryptography, but the absolute time is acceptable (since the additive secret sharing scheme is a light-weight cryptographic technique). In addition, the non-private setting has less run time than the LDP setting, because the former does not need clipping and noise augmentation steps.

\subsection{Robustness against Backdoor Attacks}

\begin{figure}[!t]
    \centering
    \includegraphics[width=1.65in]{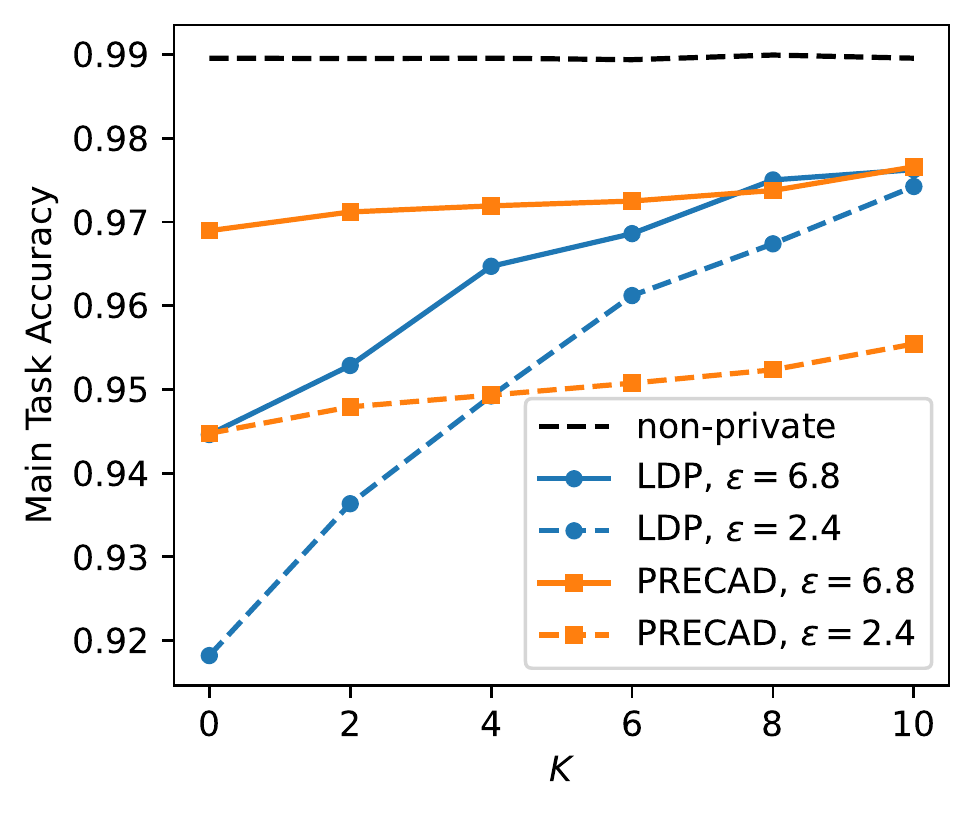}
    \includegraphics[width=1.65in]{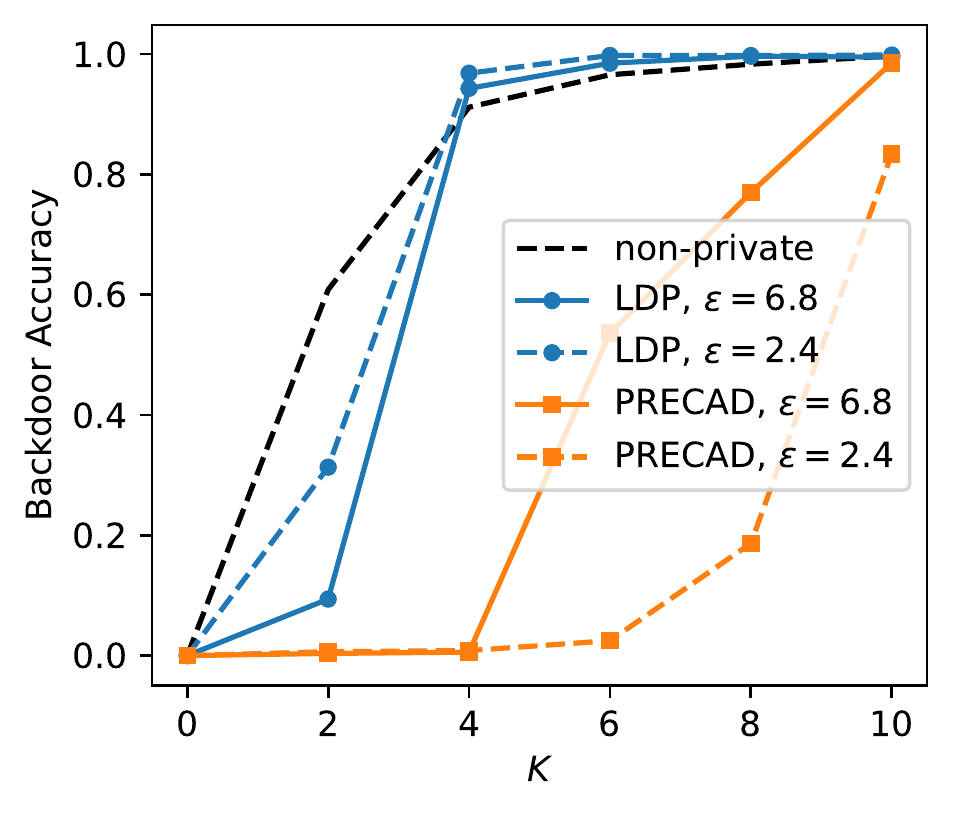}
    \vspace{-6mm}
    \caption*{(a) MNIST Dataset}
    \includegraphics[width=1.65in]{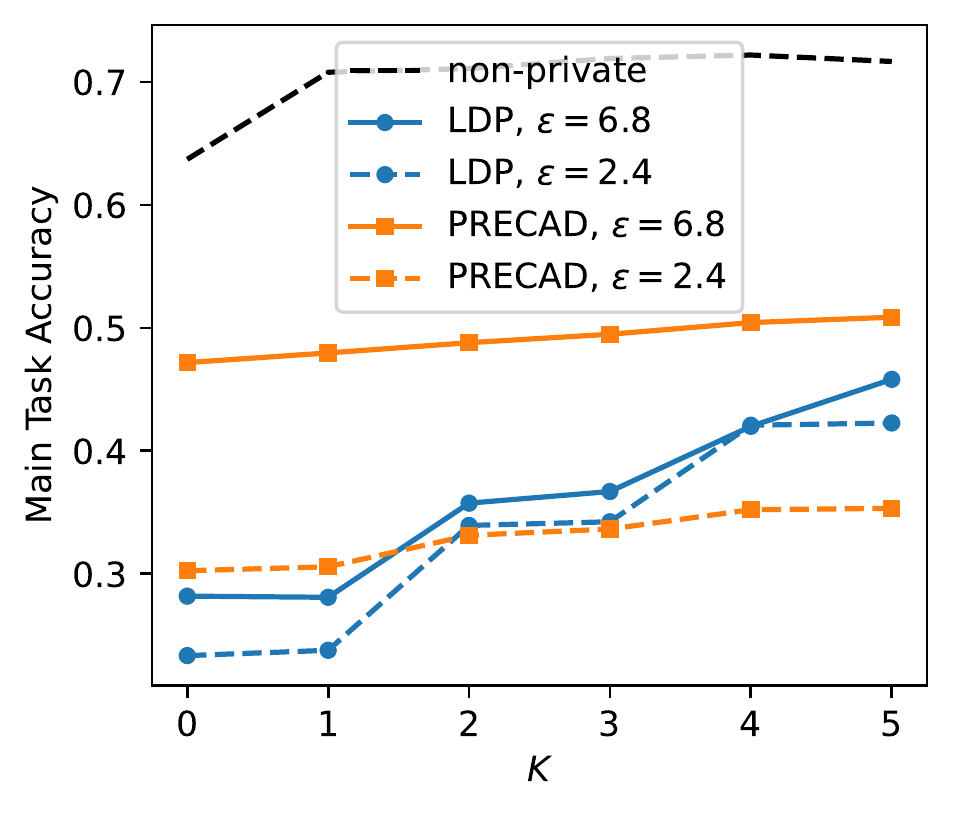}
    \includegraphics[width=1.65in]{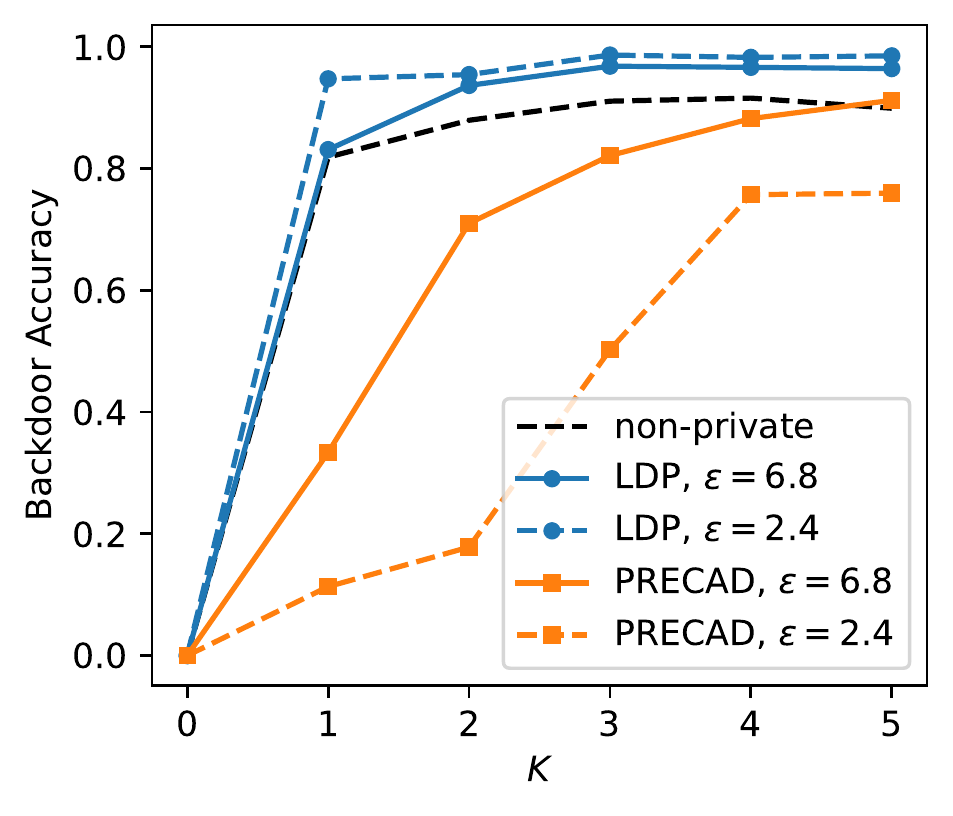}
    \vspace{-6mm}
    \caption*{(b) CIFAR-10 Dataset}
    \caption{Comparison of main task accuracy (left) and backdoor accuracy (right) under different $K$ (number of backdoor clients) under parameter $n=100$ and $C=30$.}
    \vspace{-2mm}
    \label{fig:vary_K}
\end{figure}

\begin{figure}[!t]
    \centering
    \includegraphics[width=1.65in]{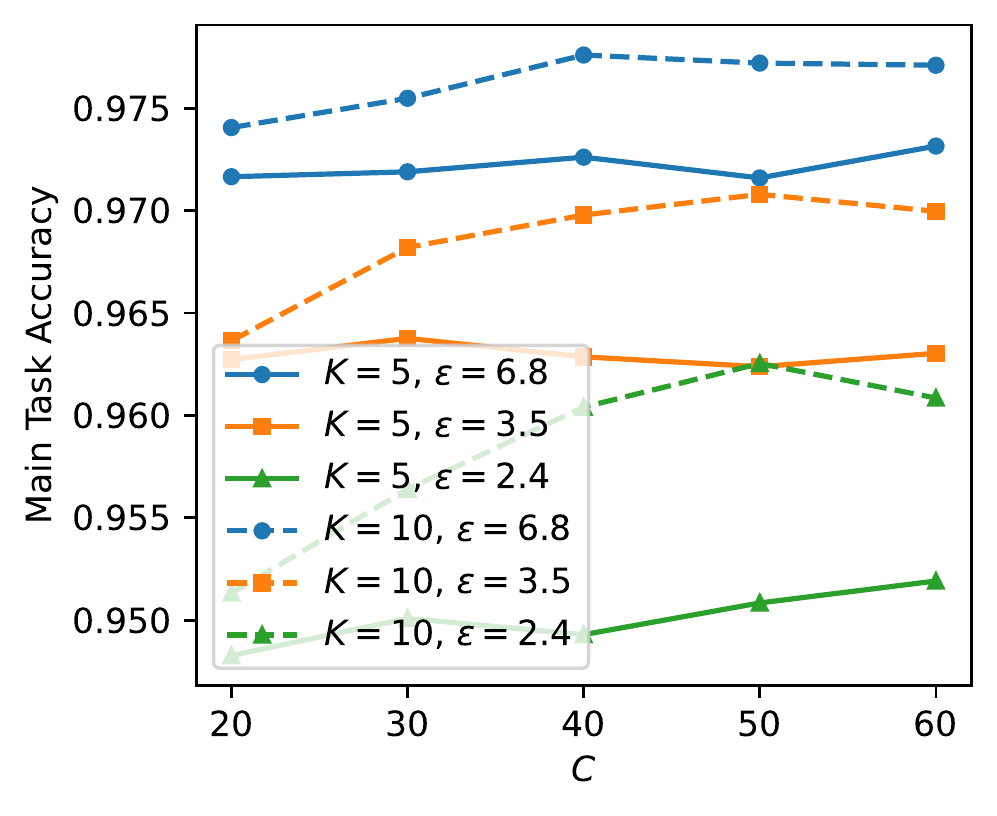}
    \includegraphics[width=1.65in]{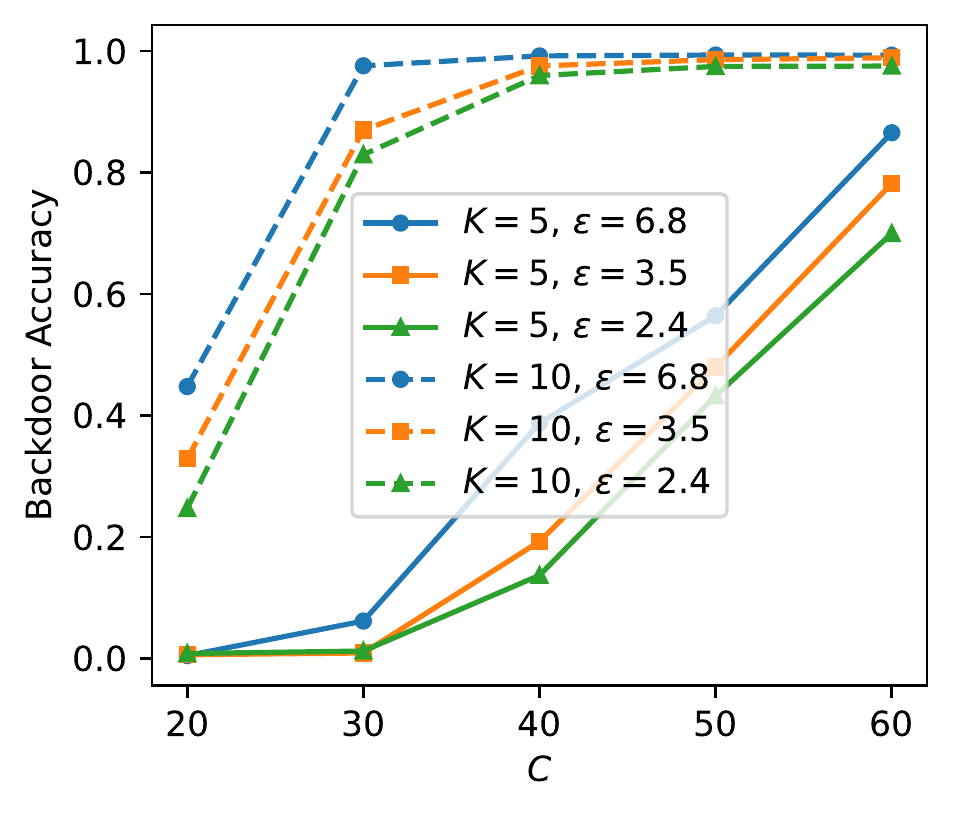}
    \vspace{-6mm}
    \caption*{(a) MNIST Dataset}
    \includegraphics[width=1.65in]{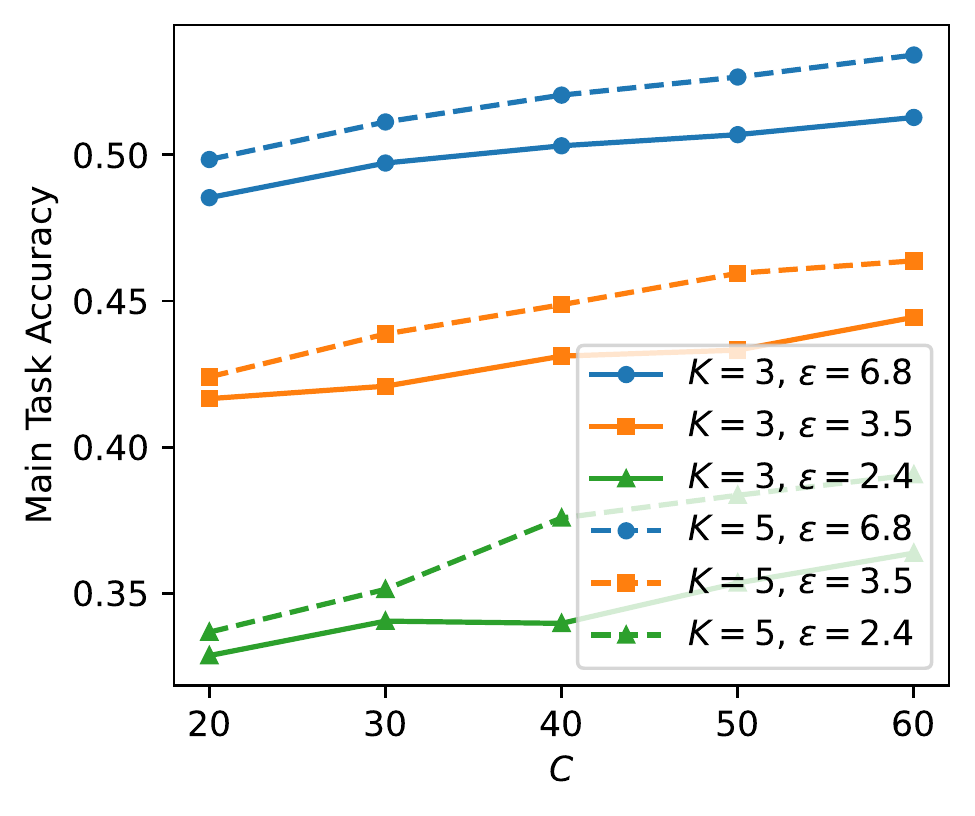}
    \includegraphics[width=1.65in]{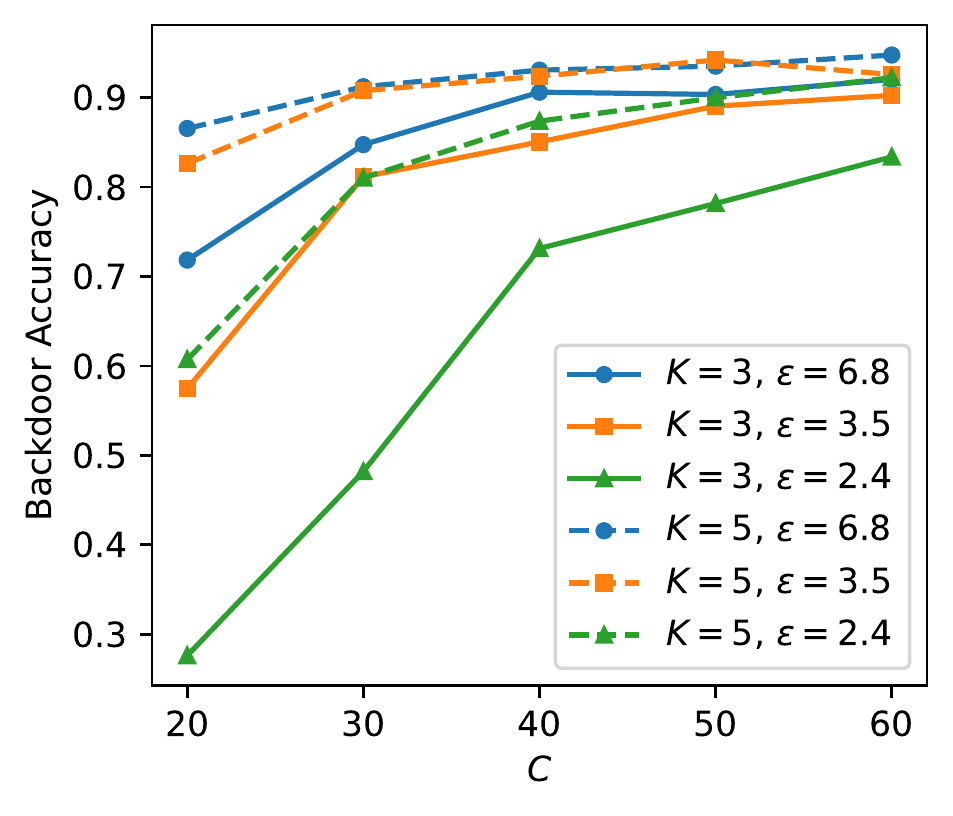}
    \vspace{-6mm}
    \caption*{(b) CIFAR-10 Dataset}
    \caption{The influence of $C$ (client-level clipping norm bound) on the main task accuracy (left) and backdoor accuracy (right) of PRECAD.}
    \vspace{-2mm}
    \label{fig:vary_C}
\end{figure}

In this subsection, we assume that the attacker corrupts $K$ (malicious) clients to implement backdoor attacks (refer to Appendix \ref{apx:experimental_details} for the attack details).  In PRECAD, all clients are required to implement client-level clipping with bound $C$; otherwise, the invalid submission will be identified, and won't be aggregated in the global model update. As a comparison, clients' submissions in LDP-based solution are also  clipped with bound $C$ (but after the noise being added by clients), which can be directly verified because clients upload plaintext submissions in LDP. However, for the non-private setting, we do not clip the submissions because benign clients do not implement record-level clipping in this setting.

\textbf{Influence of $K$.} 
Figure \ref{fig:vary_K} shows how the value of $K$  (i.e., the number of backdoor clients) affects the main task accuracy and backdoor accuracy under different levels of privacy guarantees (i.e., different amount of added Gaussian noise). In general, a large $K$ will increase the accuracy on both main task and backdoor task (recall that backdoor attacker's goal is to increase backdoor accuracy while maintaining main task accuracy), but the degree of increment might be different for different cases.  1) The non-private and LDP settings are vulnerable to backdoor attacks: merely 4 backdoor clients (for MNIST dataset) or 1 backdoor client (for CIFAR-10 dataset) can increase the backdoor accuracy from $0\%$ to more than $80\%$, where \emph{Semantic Backdoor Attack} in CIFAR-10 is stronger than \emph{Pixel-pattern Backdoor Attack} in MNIST. 2) PRECAD shows higher robustness against backdoor attacks than the other two, and smaller $\epsilon$ (i.e., with larger $\sigma$) yields lower backdoor accuracy, which indicates that noise in PRECAD enhances robustness. In contrast, the noise in LDP reduces robustness from the results. 3) More backdoor clients results in a slight improvement on main task accuracy for non-private setting and PRECAD, but a large improvement for LDP setting.  It is because the accuracy of the LDP setting is relatively low (since a larger amount of noise need to be added under the same privacy guarantee), and backdoor clients have additional advantages (they neither clip the gradient nor add noise) than benign clients on improving main task accuracy.

\textbf{Influence of $C$.}
Figure \ref{fig:vary_C} shows how the value of $C$ (i.e., the client-level clipping norm bound) affects the performance of PRECAD. We can observe that a smaller value of $C$ results in higher robustness against backdoor attacks, because of the more strictly bounded impact of the malicious clients. 1) For the MNIST dataset, when we set $C=20$, the backdoor accuracy is $0\%$ against $K=5$ malicious clients, and is less than $50\%$ against $K=10$ malicious clients, while the main task accuracy is reduced by only $1\%$, which is a negligible influence on the utility. 2) For the CIFAR-10 dataset under a stronger backdoor attack strategy, the backdoor accuracy is below $30\%$ when $K=3$ and $C=20$ under $\epsilon=2.4$, where the main task accuracy does not reduce too much. 3) On the impact of different $\epsilon$ for both datasets, a smaller $\epsilon$ (stronger privacy) leads to lower backdoor accuracy (stronger robustness) with very little impact on main task accuracy.

In summary, the noise for privacy purpose in PRECAD reduces utility but enhances robustness against backdoor attacks, which implies the robustness-utility tradeoff (similar observations have been made for adversarial robustness in \cite{lecuyer2019certified,cohen2019certified}). However, the noise in LDP setting reduces both utility and robustness (refer to Table \ref{tab:DP_compare} for a summary of the comparison of different approaches).

\section{Discussion}

\textbf{Limitations.} 
Though PRECAD improves both privacy-utility tradeoff of DP and poisoning robustness against malicious clients, it has several limitations: 1) The trust assumption of non-colluding servers is slightly strong, which might not hold for all application scenarios. 2) The utilized cryptography techniques, including secret sharing and MPC, incurs additional cost on both computation and communication. Thus, PRECAD might not be the most economic solution when the application has rigorous constraints on computation and communication.

\textbf{Generality.}
In PRECAD, the secret sharing scheme can be substituted by other cryptography primitives, such as the pairwise masking strategy in \cite{bonawitz2017practical}, with the condition that the client-level clipping can be securely verified by the server(s). Also, the perturbation mechanism can be substituted by other DP mechanisms, such as Laplace Mechanism \cite{dwork2006calibrating} and Exponential Mechanism \cite{mcsherry2007mechanism} according to the application scenario, but the method of privacy accountant and robustness analysis would be different.

\textbf{Guarantees under Malicious Setting.}
In PRECAD, we assume both servers implement the protocol honestly to guarantee the required privacy and robustness guarantees. However, if one of them, say server $\mathsf{S_A}$, maliciously deviates the protocol by omitting the noise augmentation (or adding less noise), PRECAD still provides the same privacy guarantees against $\mathsf{S_A}$ as in Theorem \ref{thm:privacy_analysis} because the noise providing DP against $\mathsf{S_A}$ is honestly added by $\mathsf{S_B}$. However, the privacy against corrupted clients and the robustness against poisoning attacks become weaker (than in Theorem \ref{thm:privacy_analysis} and Theorem \ref{thm:robustness_analysis}) since the overall noise added in the global model is reduced.

\textbf{A Faster Version with only Privacy Guarantee.}
If we only need to provide record-level privacy (i.e., without robustness requirement), then both the client-level clipping in the client-side and the secure validation in the server-side can be skipped. It would yield a more accurate model (since clipping introduces biased noise) and improve the computation efficiency.

\section{Related Work}

\subsection{Privacy-Preserving Federated Learning}
Existing approaches  on privacy-preserving federated learning  are typically designed based on cryptography and/or DP.

\textbf{Crypto-based.}
Aono et al. \cite{aono2017privacy} used additively Homomorphic Encryption to preserve the privacy of gradients and enhance the security of the distributed learning system. Mohassel et al. \cite{mohassel2017secureml}  proposed SecureML which conducts privacy-preserving learning via Secure Multi-Party Computation (MPC) \cite{yao1982protocols}, where data owners need to process, encrypt and/or secret-share their data among two non-colluding servers in the initial setup phase. Bonawitz et al. \cite{bonawitz2017practical}  proposed a secure, communication-efficient, and failure-robust protocol for secure aggregation of individual model updates.  However, all the above cryptography based protocols in some way prevent anyone from auditing clients' updates to the global model, which leaves spaces for the malicious clients to attack. For example, malicious clients can introduce stealthy backdoor functionality into the global model without being detected.

\textbf{DP-based.} Differential Privacy (DP)  was originally designed for the centralized scenario where a trusted database server, who has direct access to all clients' data in the clear, wishes to answer queries or publish statistics in a privacy-preserving manner by randomizing query results. In FL, McMahan et al. \cite{mcmahan2018learning} introduces two algorithms DP-FedSGD and DP-FedAvg, which provides client-level privacy with a trusted server. Geyer et al. \cite{geyer2017differentially} uses an algorithm similar to DP-FedSGD for the architecture search problem, and the privacy guarantee acts on client-level and  trusted server too. Li et al. \cite{li2020differentially} studies the online transfer learning and introduces a notion called task global privacy that works on record-level. However, the online setting assumes the client only interacts with the server once and does not extend to the federated setting. Zheng et al. \cite{zheng2021federated} introduced two privacy notions, that describe privacy guarantee against an individual malicious client and against a group of malicious clients on record-level privacy, based on a new privacy notion called $f$-differential privacy. However, the privacy analysis of this work does not consider the case when the server is corrupted, and the privacy budget in the worst-case adversary setting (i.e., all clients except the victim are malicious) is too large, thus does not provide meaningful privacy guarantee. 

\textbf{Hybrid Solutions.}
Truex et al. \cite{truex2019hybrid} proposed a hybrid solution  which utilizes threshold-based partially additive homomorphic encryption to reduce the needed noise for record-level DP guarantee.  Xu et al. \cite{xu2019hybridalpha} improved this hybrid solution with enhanced efficiency and accommodation of client drop-out. However, these hybrid solutions are vulnerable to malicious clients (due to encryption) and the utility gain from encryption is sensitive to the number of non-colluding parties.

\textbf{Privacy Amplification by Shuffling.} 
Different from using cryptography to improve privacy-utility tradeoff of DP, researchers introduced a shuffler model, which achieves a middle ground between CDP and LDP, in terms of both privacy and utility. Bittau et al. \cite{bittau2017prochlo} was the first to propose the shuffling idea, where a shuffler is inserted between the users and the server to break the linkage between the report and the user identification. Recent work by Cheu et al. \cite{cheu2019distributed} analyzed the differential privacy properties of the shuffler model and shows that in some cases shuffled protocols provide strictly better accuracy than local protocols.  Balle et al. \cite{balle2019privacy} provided a tighter and more general privacy amplification bound  result by leveraging a technique called blanket decomposition. We note that current shuffler models are mainly used in the application of local data aggregation.

\subsection{Robust Federated Learning}
FL systems are vulnerable to model
poisoning attacks, which aim to thwart the learning of the global model (a.k.a. Byzantine attacks) or hide a backdoor trigger into the global model (a.k.a. backdoor attacks). These attacks poison local model updates before uploading them to the server. More details of poisoning attacks and other threats of FL can be found from the survey paper \cite{lyu2020privacy}.

\textbf{Byzantine Robustness.}
Most state-of-the-art Byzantine-robust solutions play with mean or median statistics of gradient contributions. Blanchard et al. \cite{blanchard2017machine} proposed Krum which uses the Euclidean distance to determine which gradient contributions should be removed, and can theoretically withstand poisoning attacks of up to $33\%$ adversaries in the participant pool. Mhamdi et al. \cite{mhamdi2018hidden} proposed a meta-aggregation rule called Bulyan, a two-step meta-aggregation algorithm based on the Krum and trimmed median, which filters malicious updates followed by computing the trimmed median of the remaining updates.

\textbf{Backdoor Robustness.}
Andreina et al. \cite{andreina2020baffle} incorporates an additional validation phase to each round of FL to detect backdoor. Sun et al. \cite{sun2019can}  showed that clipping the norm of model updates and adding Gaussian noise can mitigate backdoor attacks that are based on the model replacement paradigm. Xie et al. \cite{xie2021crfl} provided the first general framework to train certifiably robust FL models against backdoors by exploiting clipping and smoothing on model parameters to control the global model smoothness. However, these works do not consider the privacy issue in FL.

\subsection{On both Privacy and Robustness}

Recently, some works tried to simultaneously achieve both privacy and robustness of FL. He et al. \cite{he2020secure} proposed a Byzantine-resilient and privacy-preserving solution, which makes distance-based robust aggregation rules (such as Krum \cite{blanchard2017machine}) compatible with secure aggregation via MPC and secrete sharing. So et al. \cite{so2020byzantine} developed a similar scheme based on the Krum aggregation, but rely on different cryptographic techniques, such as verifiable Shamir's secret sharing and Reed-Solomon code. Velicheti et al. \cite{velicheti2021secure} achieved both privacy and Byzantine robustness via incorporating secure averaging among randomly clustered clients before filtering malicious updates through robust aggregation. However, these works do not achieve DP against the server, since the aggregated model is directly revealed.

For the relationship between DP and robustness, Guerraoui et al. \cite{guerraoui2021differential} provided a theoretical analysis on the problem of combining DP and Byzantine resilience in FL frameworks. They concluded that the classical approaches to Byzantine-resilience and DP in distributed SGD (i.e., the LDP setting) are practically incompatible. Naseri et al. \cite{naseri2020toward} presented a comprehensive empirical evaluation to show that Local and Centralized DP (LDP/CDP) are able to defend against backdoor attacks in FL. However, the malicious clients in the LDP setting are assumed to follow the DP protocol honestly (which usually does not hold in practice), and the CDP setting assumes a trusted server (which is too strong).

\section{Conclusion}
In this paper, we developed a novel framework PRECAD for FL to enhance the privacy-utility tradeoff of DP and the robustness against model poisoning attacks, by leveraging secret sharing and MPC techniques. With the record-level clipping and securely verified client-level clipping, the noise added by servers provides both  record-level DP and client-level DP. The former is our privacy goal, and the latter is utilized to show the robustness against poisoning model updates that are uploaded by malicious clients. Our experimental results validate the improvement of PRECAD on both privacy and robustness. 

For future work, we will extend our framework to other cryptography primitives and DP mechanisms, and show certifiable robustness against model poisoning attacks.

\bibliographystyle{plain}
\bibliography{mybibfile}

\appendix

\section{Beaver's Multiplication Protocol}
\label{apx:Beaver_MPC}

In the context of additive secret sharing discussed in Sec. \ref{sec:preliminaries_secret_sharing}, we assume the $j$-th server holds shares $[x]_j$ and $[y]_j$ and wants to compute a share of $xy$. All arithmetic in this section is in a finite field $\mathbb{F}$. Beaver \cite{beaver1991efficient} showed that the servers can use pre-computed \emph{multiplication triples} to evaluate multiplication gates. A multiplication triple is a one-time-use triple of values $(a,b,c)\in\mathbb{F}^3$, chosen at random subject to the constraint that $a\cdot b=c\in\mathbb{F}$. When used in the context of multi-party computation, each server $j$ holds a share $([a]_j,[b]_j,[c]_j)\in\mathbb{F}^3$ of the triple. To jointly evaluate shares of the output of a multiplication gate $xy$, each server $j$ compute the following values:
\begin{align*}
    [d]_j=[x]_j-[a]_j,\quad
    [e]_j=[y]_j-[b]_j
\end{align*}
Each server $j$ then broadcasts $[d]_j$ and $[e]_j$. Using the broadcasted shares, every server can reconstruct $d$ and $e$, which allows each of them to compute
\begin{align*}
    z_j=de/s +d[b]_j+e[a]_j+[c]_j
\end{align*}
Recall that $s$ is the number of servers (which is a public constant) and the division symbol here indicates division (i.e., inversion then multiplication) in the field $\mathbb{F}$. A few lines of arithmetic confirm that $z_j$ is a sharing of the product $xy$:
\begin{align*}
    \sum\nolimits_j z_j&=\sum\nolimits_j (de/s +d[b]_j+e[a]_j+[c]_j) \\
    &=de+db+ea+c \\
    &=(x-a)(y-b) +(x-a)b +(y-b)a+c \\
    &=xy-ab+c=xy
\end{align*}

\section{Gaussian Differential Privacy (GDP)}
\label{apx:GDP}
\textbf{Privacy Accountant.} Since deep learning needs to iterate over the training data and apply gradient computation multiple times during the training process, each access to the training data incurs some privacy leakage from the overall privacy budget $\epsilon$. The total privacy leakage (or loss) of repeated applications of additive noise mechanisms follow from the composition theorems and their refinements \cite{dwork2014algorithmic}. The task of keeping track of the accumulated privacy loss in the course of execution of a composite mechanism, and enforcing the applicable privacy policy, can be performed by the privacy accountant. Abadi et al. \cite{abadi2016deep} proposed \emph{moments accountant} to provide a tighter bound on the privacy loss compared to the generic advanced composition theorem \cite{dwork2010boosting}. Another new and more state-of-the-art privacy accountant method is Gaussian Differential Privacy (GDP) \cite{dong2019gaussian,bu2020deep}, which was shown to obtain a tighter result than moments accountant.

\textbf{Gaussian Differential Privacy.}
GDP is a new privacy notion which faithfully retains  hypothesis testing interpretation of differential privacy. By leveraging the central limit theorem of Gaussian distribution, GDP has been shown to possess an \emph{analytically tractable} privacy accountant (vs. moments accountant must be done by numerical computation). Furthermore, GDP can be converted to a collection of $(\epsilon,\delta)$-DP guarantees (refer to Lemma \ref{lem:GDP_to_DP}). Note that even in terms of $(\epsilon,\delta)$-DP, the GDP approach gives a tighter privacy accountant than moments accountant.  GDP utilizes a single parameter $\mu\geqslant0$ (called privacy parameter) to quantify the privacy of a randomized mechanism. Similar to the privacy budget $\epsilon$ defined in DP,  a larger  $\mu$ in GDP indicates less privacy guarantee. Comparing with $(\epsilon,\delta)$-DP,  the new notion $\mu$-GDP can losslessly reason about common primitives associated with differential privacy, including composition, privacy amplification by subsampling, and group privacy. In the following, we briefly introduce some important properties (that will be used in the analysis of our approach) of GDP as below. The formal definition and more detailed results can be found in the original paper \cite{dong2019gaussian}. 

\begin{lemma}[Gaussian Mechanism for GDP \cite{dong2019gaussian}]
\label{lem:Gaussian_mechanism_GDP}
Consider the problem of privately releasing a univariate statistic $f(D)$ of a dataset $D$.  Define the sensitivity of $f(\cdot)$ as $s_f=\sup_{D,D^\prime}|f(D)-f(D^\prime)|$, where the supremum is over all neighboring datasets.
Then,  the Gaussian mechanism $\mathcal{M}(D)=f(D)+\xi$, where $\xi\sim\mathcal{N}(0,s_f^2/\mu^2)$, satisfies  $\mu$-GDP. 
\end{lemma}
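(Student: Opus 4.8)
The plan is to argue directly from the hypothesis-testing formulation of Gaussian differential privacy. Recall that a mechanism $\mathcal{M}$ is $\mu$-GDP precisely when $T\big(\mathcal{M}(D),\mathcal{M}(D^\prime)\big)\geqslant G_\mu$ for every pair of neighboring datasets $D,D^\prime$, where $T(P,Q)$ denotes the trade-off function (the smallest type-II error achievable as a function of the type-I error when testing $P$ against $Q$) and $G_\mu\coloneqq T\big(\mathcal{N}(0,1),\mathcal{N}(\mu,1)\big)$, which admits the closed form $G_\mu(\alpha)=\Phi\big(\Phi^{-1}(1-\alpha)-\mu\big)$. Since $\mathcal{M}(D)\sim\mathcal{N}\big(f(D),s_f^2/\mu^2\big)$ and $\mathcal{M}(D^\prime)\sim\mathcal{N}\big(f(D^\prime),s_f^2/\mu^2\big)$, it suffices to compute the trade-off function between two equal-variance Gaussians and compare it with $G_\mu$.

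First I would reduce to the scalar Gaussian location-testing problem. By the Neyman--Pearson lemma, the uniformly most powerful tests for $\mathcal{N}(\theta_0,\tau^2)$ versus $\mathcal{N}(\theta_1,\tau^2)$ are likelihood-ratio tests, and because the variances coincide these are simply threshold tests on the observed value. Writing out the type-I and type-II errors of a threshold test in terms of $\Phi$ and eliminating the threshold yields $T\big(\mathcal{N}(\theta_0,\tau^2),\mathcal{N}(\theta_1,\tau^2)\big)=G_{|\theta_0-\theta_1|/\tau}$. This explicit computation is the only real work in the argument and is the step I expect to be the technical crux, although it is a classical fact about Gaussian shift families. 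Instantiating it with $\tau=s_f/\mu$ gives $T\big(\mathcal{M}(D),\mathcal{M}(D^\prime)\big)=G_{\mu|f(D)-f(D^\prime)|/s_f}$.

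Finally I would combine this with the definition of sensitivity and the monotonicity of the $G$ family. Since $G_\mu(\alpha)=\Phi\big(\Phi^{-1}(1-\alpha)-\mu\big)$ is nonincreasing in $\mu$ for every fixed $\alpha$, and since $|f(D)-f(D^\prime)|\leqslant s_f$ implies $\mu|f(D)-f(D^\prime)|/s_f\leqslant\mu$, we get $G_{\mu|f(D)-f(D^\prime)|/s_f}\geqslant G_\mu$ pointwise. Hence $T\big(\mathcal{M}(D),\mathcal{M}(D^\prime)\big)\geqslant G_\mu$ for all neighbors $D,D^\prime$, which is exactly the statement that $\mathcal{M}$ is $\mu$-GDP. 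The degenerate case $f(D)=f(D^\prime)$, where the trade-off function is the trivial $G_0(\alpha)=1-\alpha$, is covered automatically by the same inequality, so no separate argument is needed.
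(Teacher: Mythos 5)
Your argument is correct: reducing to the scalar Gaussian location test via Neyman--Pearson, identifying the trade-off function as $G_{|\theta_0-\theta_1|/\tau}$, and invoking monotonicity of $G_\mu$ in $\mu$ together with the sensitivity bound is exactly the standard derivation. The paper itself does not prove this lemma --- it imports it verbatim from \cite{dong2019gaussian} --- and your proof is essentially the one given in that cited source, so there is nothing to reconcile.
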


\begin{lemma}[Composition Theorem of GDP \cite{dong2019gaussian}]
\label{lem:GDP_composition}
The $m$-fold composition of $\mu_i$-GDP mechanisms is $\sqrt{\mu_1^2+\cdots+\mu_m^2}$-GDP. 
\end{lemma}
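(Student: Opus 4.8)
The plan is to work inside the hypothesis-testing (trade-off function) formulation of privacy that underlies GDP, since composition is cleanest in that language. First I would recall the central object: for two distributions $P,Q$ on a common space, the \emph{trade-off function} $T(P,Q):[0,1]\to[0,1]$ is defined by $T(P,Q)(\alpha)=\inf\{\beta_\phi:\alpha_\phi\leqslant\alpha\}$, where the infimum ranges over all (possibly randomized) tests $\phi$ distinguishing $P$ from $Q$ and $\alpha_\phi,\beta_\phi$ are the resulting type-I and type-II errors. A mechanism is $\mu$-GDP precisely when $T(\mathcal{M}(D),\mathcal{M}(D'))\geqslant G_\mu$ pointwise for every pair of neighbors $D,D'$, where $G_\mu:=T(\mathcal{N}(0,1),\mathcal{N}(\mu,1))$ and, by the Neyman--Pearson lemma, $G_\mu(\alpha)=\Phi(\Phi^{-1}(1-\alpha)-\mu)$.

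The next step is to invoke the general composition theorem for $f$-DP: if $\mathcal{M}_1$ is $f_1$-DP and $\mathcal{M}_2$ is (adaptively) $f_2$-DP, then the joint release is $(f_1\otimes f_2)$-DP, where the \emph{tensor product} of trade-off functions is defined through product measures by $T(P,Q)\otimes T(P',Q'):=T(P\times P',Q\times Q')$. Iterating, the $m$-fold composition of the $\mu_i$-GDP mechanisms has trade-off function bounded below by $G_{\mu_1}\otimes\cdots\otimes G_{\mu_m}$. The lemma therefore reduces to the purely analytic identity
\begin{equation*}
G_{\mu_1}\otimes\cdots\otimes G_{\mu_m}=G_{\sqrt{\mu_1^2+\cdots+\mu_m^2}}.
\end{equation*}

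By associativity of the tensor product it suffices to establish the two-fold case $G_\mu\otimes G_\nu=G_{\sqrt{\mu^2+\nu^2}}$ and then induct on $m$. Unwinding the definition, $G_\mu\otimes G_\nu=T\big(\mathcal{N}(\mathbf{0},\mathbf{I}_2),\,\mathcal{N}((\mu,\nu)^\top,\mathbf{I}_2)\big)$, i.e.\ the problem of distinguishing two isotropic bivariate Gaussians whose means differ by the vector $(\mu,\nu)$. The key observation is that, by Neyman--Pearson, the optimal test rejects on large values of the log-likelihood ratio, which here is an affine function of the projection $\mu x_1+\nu x_2$ onto the mean-difference direction; equivalently, one may rotate coordinates so that the mean shift lies along a single axis with magnitude $\|(\mu,\nu)\|_2=\sqrt{\mu^2+\nu^2}$, and isotropy of the standard Gaussian makes this rotation measure-preserving. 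Either way the bivariate testing problem has exactly the same ROC curve as distinguishing $\mathcal{N}(0,1)$ from $\mathcal{N}(\sqrt{\mu^2+\nu^2},1)$, namely $G_{\sqrt{\mu^2+\nu^2}}$. A straightforward induction then yields the claim for general $m$.

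I expect the genuine obstacle to be the general $f$-DP composition theorem (the statement that composition corresponds to the tensor product $\otimes$), rather than the Gaussian computation, which is essentially a one-line rotation argument. Proving the tensor-product law rigorously requires showing that the optimal rejection region for the composed mechanism factorizes correctly across rounds even under \emph{adaptive} composition; the standard route is to argue that trade-off functions are closed under the relevant operations and to reduce the adaptive case to the non-adaptive product of likelihood ratios via a careful measurability and minimax argument that controls the worst-case pair of neighbors at each round. Once that structural result is in hand---proved directly or imported from the GDP framework---the remainder is the elementary Neyman--Pearson reduction above.
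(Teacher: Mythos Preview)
Your proposal is a faithful and correct reconstruction of the proof from Dong et al.'s GDP paper, but note that the present paper does not actually prove this lemma: it is stated in Appendix~\ref{apx:GDP} as a preliminary result imported directly from \cite{dong2019gaussian}, with no accompanying argument. So there is no ``paper's own proof'' to compare against; the paper simply cites the composition theorem as a black box and uses it downstream (via Lemma~\ref{lem:GDP_privacy_account}) in the privacy and robustness analyses.

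That said, your sketch is the standard route and is sound: reduce to the $f$-DP tensor-product law, then observe via Neyman--Pearson and rotational invariance of the isotropic Gaussian that $G_\mu\otimes G_\nu=G_{\sqrt{\mu^2+\nu^2}}$, and induct. You are also right that the real work sits in the general (adaptive) $f$-DP composition theorem rather than in the Gaussian identity; if you were writing this out in full you would either prove that structural result or, as the present paper does, simply cite it.
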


\begin{lemma}[Group Privacy of GDP \cite{dong2019gaussian}]
	\label{lem:GDP_group}
	If a mechanism is $\mu$-GDP, then it is $K\mu$-GDP for a group with size $K$. 
\end{lemma}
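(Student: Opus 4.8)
The plan is to work in the hypothesis-testing / trade-off-function formulation of GDP that underlies the definition of $\mu$-GDP in \cite{dong2019gaussian}. Recall that a mechanism $\mathcal{M}$ is $\mu$-GDP precisely when, for every pair of neighboring datasets $D,D'$, the trade-off function $T(\mathcal{M}(D),\mathcal{M}(D'))$ dominates the Gaussian trade-off function $G_\mu(\alpha)=\Phi(\Phi^{-1}(1-\alpha)-\mu)$, where $T(P,Q)(\alpha)=\inf_\phi\{1-\mathbb{E}_Q[\phi]:\mathbb{E}_P[\phi]\leqslant\alpha\}$ is the smallest type II error achievable at type I error level $\alpha$ when testing $P$ against $Q$. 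Since the neighboring relation is symmetric and the $\mu$-GDP definition quantifies over all such pairs, a group of size $K$ can be decomposed into a chain $D=D_0, D_1, \dots, D_K=D''$ in which consecutive datasets differ in a single element, and each link satisfies $T(\mathcal{M}(D_{j-1}),\mathcal{M}(D_j))\geqslant G_\mu$. The task therefore reduces to chaining these single-element guarantees along the path.

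The key step I would establish first is a triangle-inequality-type lemma for trade-off functions: for any three distributions $P_0,P_1,P_2$, writing $f=T(P_0,P_1)$ and $g=T(P_1,P_2)$, one has $T(P_0,P_2)\geqslant g\circ(1-f)$. I would prove this directly from the definition. Fix any test $\phi$ with type I error $\mathbb{E}_{P_0}[\phi]\leqslant\alpha$ and set $\gamma=\mathbb{E}_{P_1}[\phi]$. Viewing $\phi$ as a test of $P_0$ against $P_1$ forces its type II error $1-\gamma$ to be at least $f(\alpha)$ (using that $f$ is non-increasing), hence $\gamma\leqslant 1-f(\alpha)$; viewing the same $\phi$ as a test of $P_1$ against $P_2$ forces its type II error to be at least $g(\gamma)\geqslant g(1-f(\alpha))$ by monotonicity of $g$. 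Taking the infimum over admissible $\phi$ yields $T(P_0,P_2)(\alpha)\geqslant g(1-f(\alpha))$, i.e. the claimed bound.

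Next I would specialize to the Gaussian family and compute the composition explicitly. Substituting $u=\Phi^{-1}(1-\alpha)$ gives $1-G_\mu(\alpha)=\Phi(\mu-u)$, and then $G_\mu(1-G_\mu(\alpha))=\Phi(u-2\mu)=G_{2\mu}(\alpha)$, so $G_\mu\circ(1-G_\mu)=G_{2\mu}$. A short induction then propagates this along the chain: applying the triangle-inequality lemma to $P_0=\mathcal{M}(D_0)$, $P_1=\mathcal{M}(D_{K-1})$, $P_2=\mathcal{M}(D_K)$ with inductive hypothesis $T(\mathcal{M}(D_0),\mathcal{M}(D_{K-1}))\geqslant G_{(K-1)\mu}$ and link bound $T(\mathcal{M}(D_{K-1}),\mathcal{M}(D_K))\geqslant G_\mu$, one obtains $T(\mathcal{M}(D_0),\mathcal{M}(D_K))\geqslant G_\mu\circ(1-G_{(K-1)\mu})=G_{K\mu}$. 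Since this holds for the endpoints of an arbitrary size-$K$ group, $\mathcal{M}$ is $G_{K\mu}$-DP, which is exactly $K\mu$-GDP.

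I expect the main obstacle to be the monotonicity bookkeeping in the inductive step. The triangle-inequality lemma only produces a bound written in terms of the \emph{exact} trade-off functions $f$ and $g$, so to feed in the inductive hypothesis I must verify that replacing $f$ by the pointwise-smaller $G_{(K-1)\mu}$ and $g$ by the pointwise-smaller $G_\mu$ can only decrease the composite $g\circ(1-f)$: from $f\geqslant G_{(K-1)\mu}$ one gets $1-f\leqslant 1-G_{(K-1)\mu}$, so $g(1-f)\geqslant g(1-G_{(K-1)\mu})$ because $g$ is non-increasing, and then $g(1-G_{(K-1)\mu})\geqslant G_\mu(1-G_{(K-1)\mu})$ because $g\geqslant G_\mu$. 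This chain of inequalities is routine but easy to mis-orient, and getting the direction of every monotonicity step right is the delicate part; the explicit identity $G_\mu\circ(1-G_\mu)=G_{2\mu}$ is the clean computational engine that makes the otherwise abstract chaining collapse back onto the single-parameter family $G_{K\mu}$.
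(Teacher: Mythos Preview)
Your argument is correct and is essentially the group-privacy proof given in the original GDP paper \cite{dong2019gaussian}: decompose the size-$K$ group into a chain of single-element neighbors, use the trade-off-function triangle inequality $T(P_0,P_2)\geqslant g\circ(1-f)$, and exploit the closed-form identity $G_{\mu_1}\circ(1-G_{\mu_2})=G_{\mu_1+\mu_2}$ to collapse the chain to $G_{K\mu}$.

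The present paper, however, does not supply its own proof of this lemma at all; it simply quotes the statement from \cite{dong2019gaussian} as a black-box tool in Appendix~\ref{apx:GDP} and invokes it once in the robustness analysis. So there is nothing in the paper to compare your proposal against beyond the citation itself; what you have written is a faithful reconstruction of the cited result rather than an alternative route.
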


\begin{lemma}[$\mu$-GDP to $(\epsilon,\delta)$-DP \cite{dong2019gaussian}]
	\label{lem:GDP_to_DP}
	A mechanism is $\mu$-GDP if and only if it is $(\epsilon,\delta(\epsilon))$-DP for all $\epsilon\geqslant0$, where
	\begin{align*}
		\delta(\epsilon)=\Phi\left(-\frac{\epsilon}{\mu}+\frac{\mu}{2}\right)-e^\epsilon\cdot\Phi\left(-\frac{\epsilon}{\mu}-\frac{\mu}{2}\right),
	\end{align*}
	and $\Phi$ denotes the CDF of standard normal (Gaussian) distribution. 
\end{lemma}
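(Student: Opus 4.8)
The plan is to work entirely within the hypothesis-testing ($f$-DP) formulation that underlies GDP, since the claimed equivalence is, at its core, a statement of convex duality between a single trade-off function and the whole family of $(\epsilon,\delta)$-pairs it dominates. First I would recall that $\mu$-GDP means the trade-off function between $\mathcal{M}(D)$ and $\mathcal{M}(D')$ is, for every pair of neighbors, pointwise lower bounded by the Gaussian trade-off function $G_\mu(\alpha)=\Phi(\Phi^{-1}(1-\alpha)-\mu)$, which is exactly the optimal type-II versus type-I error curve for testing $\mathcal{N}(0,1)$ against $\mathcal{N}(\mu,1)$. The bridge to the $(\epsilon,\delta)$ language is the standard fact that $(\epsilon,\delta)$-DP is equivalent to $f_{\epsilon,\delta}$-DP, where $f_{\epsilon,\delta}(\alpha)=\max\{0,\,1-\delta-e^\epsilon\alpha,\,e^{-\epsilon}(1-\delta-\alpha)\}$. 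Thus both sides of the desired equivalence become statements about trade-off functions, and the task reduces to relating $G_\mu$ to the upper envelope of the family $\{f_{\epsilon,\delta(\epsilon)}\}_{\epsilon\geq0}$.

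Second, for the forward direction ($\mu$-GDP implies $(\epsilon,\delta(\epsilon))$-DP for all $\epsilon$), I would invoke the duality identity that a mechanism is $f$-DP precisely when it is $(\epsilon,\,1+f^*(-e^\epsilon))$-DP for every $\epsilon\geq0$, where $f^*(y)=\sup_{0\leq\alpha\leq1}(\alpha y-f(\alpha))$ is the convex conjugate. Applying this with $f=G_\mu$ requires computing $G_\mu^*(-e^\epsilon)$ explicitly. Differentiating $G_\mu$ via the substitution $u=\Phi^{-1}(1-\alpha)$ gives $G_\mu'(\alpha)=-\exp(\mu u-\mu^2/2)$; setting this equal to $-e^\epsilon$ locates the maximizer at $\alpha^\star=\Phi(-\epsilon/\mu-\mu/2)$, at which $G_\mu(\alpha^\star)=\Phi(\epsilon/\mu-\mu/2)$. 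Substituting back and using $1-\Phi(x)=\Phi(-x)$ yields $1+G_\mu^*(-e^\epsilon)=\Phi(-\epsilon/\mu+\mu/2)-e^\epsilon\Phi(-\epsilon/\mu-\mu/2)$, which is exactly the stated $\delta(\epsilon)$. Hence $\mu$-GDP delivers $(\epsilon,\delta(\epsilon))$-DP simultaneously for all $\epsilon\geq0$.

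Third, for the converse I would argue that the collection of primal constraints $\{f_{\epsilon,\delta(\epsilon)}\}_{\epsilon\geq0}$ is \emph{tight}, in the sense that its pointwise supremum regenerates $G_\mu$ rather than a strictly weaker trade-off function. Since $G_\mu$ is a closed convex function on $[0,1]$, the Fenchel--Moreau (biconjugate) theorem guarantees $G_\mu=G_\mu^{**}$, and unwinding the double conjugate expresses $G_\mu$ as the upper envelope of its supporting affine functions, each of which corresponds to one $f_{\epsilon,\delta(\epsilon)}$ constraint through the correspondence established in the forward direction. Therefore a mechanism satisfying $(\epsilon,\delta(\epsilon))$-DP for every $\epsilon$ has a trade-off function dominating each $f_{\epsilon,\delta(\epsilon)}$, hence dominating their supremum $G_\mu$, which is precisely the definition of $\mu$-GDP.

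The main obstacle I anticipate is the converse's reliance on the \emph{exact} duality rather than a mere inequality. The forward implication only needs that each $(\epsilon,\delta(\epsilon))$ guarantee is implied, but the converse requires that $\delta(\epsilon)=1+G_\mu^*(-e^\epsilon)$ is the \emph{smallest} admissible $\delta$ at each $\epsilon$, so that reassembling the constraints reconstructs $G_\mu$ with no loss. This hinges on two facts that the explicit computation above certifies: that $G_\mu$ is closed and convex (so that Fenchel--Moreau applies and the envelope is exact), and that the stationary point $\alpha^\star$ found by the derivative calculation is the true conjugate maximizer rather than merely a feasible candidate. Verifying convexity of $G_\mu$ and the optimality of $\alpha^\star$ is thus the technical heart of the argument, after which both directions follow from the duality machinery.
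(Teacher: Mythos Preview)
The paper does not supply its own proof of this lemma; it is stated with a citation to Dong et al.\ (2019) and used as a black box in the subsequent privacy and robustness analyses. Your proposal is correct and follows essentially the same convex-duality route as the original reference: the forward direction computes the convex conjugate of $G_\mu$ explicitly to extract the tight $\delta(\epsilon)$, and the converse invokes Fenchel--Moreau to recover $G_\mu$ as the upper envelope of its supporting affine pieces, each identified with one $f_{\epsilon,\delta(\epsilon)}$ constraint.
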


\begin{lemma}[Privacy Central Limit Theorem of GDP \cite{bu2020deep}]
	\label{lem:GDP_privacy_account}
	Denote $p$ as the subsampling probability, $T$ as the total number of iterations and $\sigma$ as the noise scale (i.e., the ratio between the standard deviation of Gaussian noise and the gradient norm bound). Then,
	 algorithm DP-SDG asymptotically satisfies $\mu$-GDP with privacy parameter $\mu=p\sqrt{T(e^{1/\sigma^2}-1)}$. 
\end{lemma}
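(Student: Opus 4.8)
The plan is to prove the statement within the $f$-DP / trade-off-function framework of \cite{dong2019gaussian}, since the Gaussian-mechanism, composition, and conversion results above (Lemmas~\ref{lem:Gaussian_mechanism_GDP}--\ref{lem:GDP_to_DP}) do not by themselves capture the privacy amplification induced by Poisson subsampling, which is exactly what produces the factor $e^{1/\sigma^2}-1$ in place of $1/\sigma^2$. First I would reduce one iteration of DP-SGD to a \emph{subsampled Gaussian mechanism}: after record-level clipping the per-step sensitivity is $1$, so distinguishing a neighboring pair reduces to the simple-versus-mixture hypothesis test between $P_0=\mathcal{N}(0,\sigma^2)$ and $P_1=(1-p)\mathcal{N}(0,\sigma^2)+p\,\mathcal{N}(1,\sigma^2)$, whose trade-off function $f_p=T(P_0,P_1)$ is obtained by applying the subsampling operator to the Gaussian trade-off function $G_{1/\sigma}$ (the trade-off function of testing $\mathcal{N}(0,1)$ against $\mathcal{N}(1/\sigma,1)$). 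By the general $f$-DP adaptive composition theorem \cite{dong2019gaussian}, the full $T$-iteration algorithm has trade-off function equal to the $T$-fold tensor product $f_p^{\otimes T}$.

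The core of the argument is the \emph{central limit theorem for composition} of \cite{dong2019gaussian}: for a triangular array of trade-off functions whose per-step log-likelihood-ratio moments satisfy a Lindeberg-type condition, the composed trade-off function converges uniformly to a Gaussian trade-off function $G_\mu$, where $\mu$ is pinned down by $\tfrac{1}{2}\mu^2=\lim_{T\to\infty}\sum_{i=1}^{T}\mathrm{kl}(P_0\|P_1)$ with $\mathrm{kl}(P_0\|P_1)=\int\log\frac{dP_0}{dP_1}\,dP_0$. The asymptotic regime is $p\to0$, $T\to\infty$ with $p\sqrt{T}$ held constant, which is precisely the scaling under which $\mu=p\sqrt{T(e^{1/\sigma^2}-1)}$ stays bounded. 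I would invoke this CLT as the engine and reduce the whole problem to a single leading-order moment computation.

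The main computation is a Taylor expansion of $\mathrm{kl}(P_0\|P_1)$ in $p$. Writing the mixture likelihood ratio as $dP_1/dP_0=1+p(r-1)$, where $r=d\mathcal{N}(1,\sigma^2)/d\mathcal{N}(0,\sigma^2)$, expanding $-\log(1+p(r-1))$ shows the first-order term vanishes because $\int (r-1)\,dP_0=0$, while the second-order term is $\tfrac{p^2}{2}\big(\int r^2\,dP_0-1\big)=\tfrac{p^2}{2}\,\chi^2\!\big(\mathcal{N}(1,\sigma^2)\,\|\,\mathcal{N}(0,\sigma^2)\big)=\tfrac{p^2}{2}(e^{1/\sigma^2}-1)$, using the closed form of the chi-squared divergence between unit-shifted equal-variance Gaussians. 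Summing over the $T$ iterations gives $\tfrac{1}{2}\mu^2=\tfrac{1}{2}Tp^2(e^{1/\sigma^2}-1)$, hence $\mu=p\sqrt{T(e^{1/\sigma^2}-1)}$; one also checks that the second moment $\sum_i\kappa_2$ matches $\mu^2$ to leading order and that the third absolute moment is $O(Tp^3)=O\big((p\sqrt{T})^2\,p\big)\to0$, so the CLT hypotheses are met.

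The hardest part is making this limit rigorous rather than formal. The subsampled-Gaussian trade-off function has no clean closed form, so the Lindeberg condition must be verified by uniformly bounding the third absolute moment of the log-likelihood ratio of $f_p$ and showing that the higher-order Taylor remainder is genuinely negligible after multiplication by $T$; in addition, the asymmetry of the add/remove neighboring relation means one must symmetrize and confirm that $\mathrm{kl}(P_1\|P_0)$ contributes the same leading order $\tfrac{p^2}{2}(e^{1/\sigma^2}-1)$, so that the limiting object is a genuine symmetric Gaussian trade-off function $G_\mu$ and not an asymmetric one. Once uniform convergence of $f_p^{\otimes T}$ to $G_\mu$ is established, $\mu$-GDP follows by definition, completing the proof.
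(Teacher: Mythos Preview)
The paper does not prove this lemma at all: it is stated in Appendix~\ref{apx:GDP} as an imported result from \cite{bu2020deep} (building on the $f$-DP framework of \cite{dong2019gaussian}) and is used as a black box in the proofs of Lemmas~\ref{lem:privacy_server_clients}, \ref{lem:privacy_clients}, and \ref{lem:client_level_GDP}. So there is no ``paper's proof'' to compare against.

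That said, your sketch is an accurate outline of the argument in the cited source. The reduction of one DP-SGD step to the simple-versus-mixture test $\mathcal{N}(0,\sigma^2)$ vs.\ $(1-p)\mathcal{N}(0,\sigma^2)+p\,\mathcal{N}(1,\sigma^2)$, the use of $T$-fold trade-off composition, the invocation of the composition CLT in the regime $p\to 0$, $p\sqrt{T}$ fixed, and the identification of the limiting $\mu$ via the leading $p^2$ term $\tfrac{p^2}{2}\chi^2(\mathcal{N}(1,\sigma^2)\|\mathcal{N}(0,\sigma^2))=\tfrac{p^2}{2}(e^{1/\sigma^2}-1)$ are exactly the steps in \cite{bu2020deep}. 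Your caveats about verifying the Lindeberg-type moment condition and handling the add/remove asymmetry are also the right places where care is needed; the original reference resolves these by working with the symmetrized trade-off function and bounding the third-moment functional directly. In short, nothing is missing from your plan relative to what the paper relies on, because the paper relies only on the statement.
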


In this paper, we use $\mu$-GDP as our primary privacy accountant method due to its good property on composition and accountant of privacy amplification in Lemma \ref{lem:GDP_privacy_account}, and then convert the result to $(\epsilon,\delta)$-DP via Lemma \ref{lem:GDP_to_DP}. We note that other privacy accountant methods, such as moments accountant \cite{abadi2016deep} and R{\'e}nyi DP (RDP) \cite{mironov2017renyi}, are also applicable to the proposed scheme and theoretical analysis, but might lead to suboptimal results.

\section{Proof of Privacy (Theorem \ref{thm:privacy_analysis})}
\label{apx:proof_thm_privacy_analysis}

We utilize GDP (introduced in Appendix \ref{apx:GDP}) as our privacy accountant tool. The two cases in Theorem \ref{thm:privacy_analysis} are discussed as follows.

\textbf{Case 1: one server and multiple clients are corrupted.}
We first consider the case when server $\mathsf{S_A}$ is corrupted by  the attacker (the case of corrupting server $\mathsf{S_B}$  is similar).    Recall that server $\mathsf{S_A}$ receives the share $[\Delta\theta_t^i]_\mathsf{A}$ from client $\mathsf{C}_i~(\forall i\in\mathcal{I}_t)$  and the noisy share aggregation $[\sum_{i\in\mathcal{I}_t^*}\Delta\theta_t^i]_\mathsf{B}+[\xi_t^\mathsf{B}]_\mathbb{F}$ from server $\mathsf{S_B}$. To infer one record of the victim client $\mathsf{C}_i~(i\in\mathcal{I}_t^*)$, the attacker can obtain the maximum information from the computation of $\sum\nolimits_{i\in\mathcal{I}_t^*}\Delta\theta_t^i+\xi_t^\mathsf{B}$ by adding $[\sum\nolimits_{i\in\mathcal{I}_t^*}\Delta\theta_t^i]_\mathsf{A}$ with  $[\sum_{i\in\mathcal{I}_t^*}\Delta\theta_t^i]_\mathsf{B}+\xi_t^\mathsf{B}$ and converting the result into a real vector. In the most strong case, where all clients (except the victim $\mathsf{C}_i$) are corrupted, the best that the attacker can do is to compute  $\Delta\theta_t^i+\xi_t^\mathsf{B}$. On the other hand,  since server $\mathsf{S_A}$ has the information of which iterations that client $\mathsf{C}_i$ participates in,  the attacker only needs to consider these iterations to infer one record of client $\mathsf{C}_i$.  By leveraging Lemma \ref{lem:GDP_privacy_account}, we obtain the following Lemma.

\begin{lemma}[Privacy against one server and any number of clients]
\label{lem:privacy_server_clients}
    Assume the attacker corrupts one server (either $\mathsf{S_A}$ or $\mathsf{S_B}$, but not the both) and any number of clients (except the victim client $\mathsf{C}_i$). Then, for the benign client $\mathsf{C}_i$ who subsamples one record with probability $p_i$, Algorithm \ref{alg:framework} asymptotically satisfies record-level $\mu_i$-GDP with privacy parameter $\mu_i=p_i\sqrt{T_i(e^{1/\sigma^2}-1)}$, where $T_i$ is the total number of iterations that client $\mathsf{C}_i$ participates in (i.e., selected by the server).  
\end{lemma}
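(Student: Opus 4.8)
The plan is to reduce the attacker's view to exactly the output of a subsampled Gaussian mechanism run on $\mathsf{C}_i$'s dataset over the $T_i$ iterations in which $\mathsf{C}_i$ participates, and then invoke the GDP privacy central limit theorem (Lemma \ref{lem:GDP_privacy_account}). First I would pin down what a corrupted server learns. Suppose without loss of generality that $\mathsf{S_A}$ is corrupted; its transcript consists of the shares $[\Delta\theta_t^i]_\mathsf{A}$ for all participating $i$, the masked vectors $b_t^i=\Delta\theta_t^i-a$ and the comparison-gate messages from secure validation, and the message $[\sum_{i\in\mathcal{I}_t^*}\Delta\theta_t^i]_\mathsf{B}+[\xi_t^\mathsf{B}]_\mathbb{F}$ from $\mathsf{S_B}$. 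By the zero-knowledge property of secret sharing and of Beaver multiplication (Sec.\ \ref{sec:secure_validation}), the only new information beyond what $\mathsf{S_A}$ itself generates is $\sum_{i\in\mathcal{I}_t^*}\Delta\theta_t^i+\xi_t^\mathsf{B}$ (the noise $\xi_t^\mathsf{A}$ that $\mathsf{S_A}$ added is known to it and cancels). In the worst case all clients other than the victim $\mathsf{C}_i$ are corrupted and semi-honest, so the attacker knows every $\Delta\theta_t^j$ with $j\neq i$ and can subtract them, leaving $\Delta\theta_t^i+\xi_t^\mathsf{B}$ for each iteration $t$ in which $\mathsf{C}_i$ participates (a benign $\mathsf{C}_i$ always passes validation, so no conditioning issue arises). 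Since $\mathsf{S_A}$ also learns the client-selection outcome, the attacker knows exactly which $T_i$ iterations these are; hence there is no amplification from the client-level sampling rate $q$, and the view in the remaining $T-T_i$ iterations is independent of $\mathsf{C}_i$'s data.

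Next I would compute the per-iteration privacy cost of releasing $\Delta\theta_t^i+\xi_t^\mathsf{B}$. Within iteration $t$, client $\mathsf{C}_i$ Poisson-subsamples its records at rate $p_i$, clips each per-record gradient to $\ell_2$-norm $R$, sums the negated clipped gradients, and then applies client-level clipping to norm $C$. Adding or removing one record of $D_i$ changes the pre-client-clip sum by at most $R$ in $\ell_2$-norm; client-level clipping is the Euclidean projection onto the ball of radius $C$, hence $1$-Lipschitz, so it does not increase the $\ell_2$-sensitivity, which stays $\le R$. The noise $\xi_t^\mathsf{B}\sim\mathcal{N}(0,(R\sigma)^2\mathbf{I})$ therefore realizes a Gaussian mechanism with noise multiplier (standard deviation over sensitivity) equal to $\sigma$, and combined with the record-level Poisson subsampling at rate $p_i$, each participating iteration is dominated by a subsampled Gaussian mechanism with parameters $(p_i,\sigma)$ — exactly the per-step mechanism analyzed for DP-SGD. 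This is where the first case differs from the second: only $\xi_t^\mathsf{B}$, not $\xi_t^\mathsf{A}$, is unknown to the attacker, so the multiplier is $\sigma$ rather than $\sqrt{2}\,\sigma$.

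Finally I would compose across the $T_i$ participating iterations and appeal to Lemma \ref{lem:GDP_privacy_account}: the composition of $T_i$ subsampled Gaussian steps with subsampling rate $p_i$ and noise scale $\sigma$ asymptotically satisfies $\mu_i$-GDP with $\mu_i=p_i\sqrt{T_i(e^{1/\sigma^2}-1)}$. The subsequent processing — rescaling by $\eta/\sum_{i\in\mathcal{I}_t^*}p_i|D_i|$, updating $\theta_{t+1}$, and running the remaining iterations — is data-independent post-processing of this view and does not change the guarantee, and the negligible error from the fixed-point embedding is ignored as elsewhere in the paper. The main obstacle, and the step needing the most care, is the first one: rigorously arguing that the corrupted server together with all corrupted clients learns \emph{nothing more} than $\Delta\theta_t^i+\xi_t^\mathsf{B}$ in each participating iteration — i.e., that the secret-sharing shares, the Beaver-triple masks $b_t^i$, and the comparison-gate messages reveal no additional function of $\mathsf{C}_i$'s input — which is precisely where the secure composition of the unconditionally secure MPC sub-protocols (Sec.\ \ref{sec:secure_validation}) is invoked. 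Once that reduction is established, the remainder is the standard DP-SGD/GDP accounting.
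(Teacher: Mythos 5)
Your proposal is correct and follows essentially the same route as the paper's proof: reduce the corrupted server's view (via the zero-knowledge of the secret sharing and validation sub-protocols, and worst-case collusion of all other clients) to $\Delta\theta_t^i+\xi_t^\mathsf{B}$ over the $T_i$ participating iterations known to the attacker, bound the record-level $\ell_2$-sensitivity by $R$, and invoke the GDP privacy central limit theorem for the subsampled Gaussian mechanism with rate $p_i$ and multiplier $\sigma$. Your justification that client-level clipping cannot enlarge the sensitivity---non-expansiveness of the Euclidean projection onto the ball of radius $C$---is in fact slightly cleaner than the paper's argument, which writes $\Delta\theta_t^i=-\alpha\sum_s\tilde{g}_s$ and bounds the change by $\alpha R$ while implicitly treating the data-dependent factor $\alpha$ as unchanged across neighboring datasets.
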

\begin{proof}
    Assume the attacker corrupts $\mathsf{S_A}$. Denote the set of indices of corrupted clients as $\mathcal{A}$, where $i\notin\mathcal{A}$ (i.e., except the victim $\mathsf{C}_i$). For any iteration $t$ that involves client $\mathsf{C}_i$ (i.e., $i\in\mathcal{I}_t^*$), the attacker is able to compute $\sum\nolimits_{i\in\mathcal{I}_t^*,i\notin\mathcal{A}}\Delta\theta_t^i+\xi_t^\mathsf{B}$, where $\xi_t^\mathsf{B}\sim\mathcal{N}(0,(R\sigma)^2\cdot \mathbf{I})$. Since adding or removing one record in client $\mathsf{C}_i$'s dataset $D_i$ would change the value of $\Delta\theta_t^i$  by at most $R$ in terms of $\ell_2$-norm due to the record-level clipping on $g_s$ (in Algorithm \ref{alg:local_updates}), thus $\sum\nolimits_{i\in\mathcal{I}_t^*,i\notin\mathcal{A}}\Delta\theta_t^i$ will be changed by at most $R$ (i.e., the record-level sensitivity of $\sum\nolimits_{i\in\mathcal{I}_t^*,i\notin\mathcal{A}}\Delta\theta_t^i$ is $R$). We note that the client-level clipping (with bound $C$) might reduce the influence of one record on the final submission $\Delta\theta_t^i$, but never enlarge it. To see why, let's do some calculations. Based on the notations in Algorithm \ref{alg:local_updates}, we denote $\alpha=1/\max\{1,~\|\Delta\tilde{\theta}_t^i\|_2/C\}$. According to Line-6 and Line-7 in Algorithm \ref{alg:local_updates}, we have $\Delta\theta_t^i=-\alpha \sum_{s\in\mathcal{S}}\tilde{g}_s$, where $\|\tilde{g}_s\|_2\leqslant R$ due to the record-level clipping. Therefore, the existence of  one record changes $\Delta\theta_t^i$ at most $\alpha R$ in terms of $\ell_2$-norm, where $\alpha R\leqslant R$ because $\alpha\in(0,1]$ by the definition. Since server $\mathsf{S_A}$ does not know whether a specific record is sampled by client $\mathsf{C}_i$ at these iterations (recall that each record is sampled by client $\mathsf{C}_i$ with probability $p_i$), the privacy amplification of record subsampling holds. Note that client $\mathsf{C}_i$ only participates in partial of all iterations, and we denote $T_i$ as the total number of iterations that client $\mathsf{C}_i$ participates in. According to Lemma \ref{lem:GDP_privacy_account}, with noise $\mathcal{N}(0,(R\sigma)^2\cdot \mathbf{I})$, sensitivity $R$, subsampling probability $p_i$ and total number of iterations $T_i$, the privacy parameter of GDP is $\mu_i=p_i\sqrt{T_i(e^{1/\sigma^2}-1)}$. 
    
    When $\mathsf{S_B}$ is corrupted by the attacker, the result is the same because the two noises $\xi_t^\mathsf{A}$ and $\xi_t^\mathsf{B}$ follow the same distribution.
\end{proof}

\textbf{Case 2: multiple clients are corrupted.}
Compared with servers, one or multiple corrupted clients (except the victim client $\mathsf{C}_i$) possess less information about the private data of client $\mathsf{C}_i$, thus the privacy guarantee against the attacker is stronger. For each client who participates in the $t$-th iteration, he/she is able to observe the current model $\theta_t$. After observing both $\theta_t$ and $\theta_{t+1}$, one corrupted client $\mathsf{C}_j~(j\neq i)$ can compute  $\sum\nolimits_{i\in\mathcal{I}_t^*, i\neq j}\Delta\theta_t^i+\xi_t^\mathsf{A}+\xi_t^\mathsf{B}$, where $\xi_t^\mathsf{A}+\xi_t^\mathsf{B}\sim\mathcal{N}(0,2\cdot(R\sigma)^2 \mathbf{I})$ (note that  client $\mathsf{C}_j$ knows $\Delta\theta_t^j$ but does not know $\xi_t^\mathsf{A}$ or $\xi_t^\mathsf{B}$). The following theorem shows that if any number of clients (excludes client $\mathsf{C}_i$) try to collaboratively infer one record of client $\mathsf{C}_i$, they cannot do better than keeping track of all $\theta_t~(t=1,\cdots,T)$, which is because the Gaussian noise is added by the servers and other clients do not know whether a record is sampled by client $\mathsf{C}_i$.

\begin{lemma}[Privacy against any number of clients]
\label{lem:privacy_clients}
	Assume the attacker corrupts any number of clients (except the victim client $\mathsf{C}_i$). Then, for a benign client $\mathsf{C}_i$ who subsamples one record with probability $p_i$, Algorithm \ref{alg:framework} asymptotically satisfies record-level $\mu_i$-GDP with privacy parameter $\mu_i=qp_i\sqrt{T(e^{1/(2\sigma^2)}-1)}$, where $q$ is the probability that a client is selected by the servers in each iteration.
\end{lemma}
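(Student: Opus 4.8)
\emph{Proof proposal.} The plan is to reduce the transcript seen by a client-only attacker to a $T$-fold composition of a subsampled Gaussian mechanism and then invoke the privacy central limit theorem of GDP (Lemma \ref{lem:GDP_privacy_account}). First I would pin down what the attacker can compute. Since it corrupts no server, in iteration $t$ it can at best recover $\sum_{i\in\mathcal{I}_t^*}\Delta\theta_t^i+\xi_t^\mathsf{A}+\xi_t^\mathsf{B}$ from the pair $(\theta_t,\theta_{t+1})$ (the averaging denominator in \eqref{equ:global_model_update} is public), and after subtracting the updates of the clients it controls it is left with $\sum_{i\in\mathcal{I}_t^*,\,i\notin\mathcal{A}}\Delta\theta_t^i+\xi_t^\mathsf{A}+\xi_t^\mathsf{B}$. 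Two features drive the bound: (i) the attacker knows neither $\xi_t^\mathsf{A}$ nor $\xi_t^\mathsf{B}$, so the effective noise is $\xi_t^\mathsf{A}+\xi_t^\mathsf{B}\sim\mathcal{N}(0,2(R\sigma)^2\mathbf I)$, i.e.\ Gaussian with standard deviation $\sqrt2\,R\sigma$; (ii) the attacker does not control the servers, hence does not learn the participant set $\mathcal{I}_t^*$, so the client-selection coin is part of the mechanism's internal randomness and from the attacker's point of view $\mathsf{C}_i$ is present in round $t$ only with probability $q$, independently across rounds. This is exactly the point where Case 2 departs from Case 1, in which the corrupted server reveals the participation pattern.

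Next I would fix a single target record $z^*$ in $\mathsf{C}_i$'s dataset and track its influence on the whole transcript. In round $t$, $z^*$ affects the output only if $\mathsf{C}_i$ is selected (probability $q$) \emph{and} $z^*$ is included in $\mathsf{C}_i$'s per-record subsample (probability $p_i$); these events are independent, so $z^*$ is ``active'' in round $t$ with probability $qp_i$, independently across the $T$ rounds. When $z^*$ is active, its clipped gradient $\tilde g_{z^*}$ satisfies $\|\tilde g_{z^*}\|_2\le R$ by the record-level clipping in Algorithm \ref{alg:local_updates}, and — exactly as in the proof of Lemma \ref{lem:privacy_server_clients} — the subsequent client-level clipping by $C$ can only shrink, never enlarge, the change $z^*$ induces in $\Delta\theta_t^i$; hence adding or removing $z^*$ changes $\sum_{i\in\mathcal{I}_t^*,\,i\notin\mathcal{A}}\Delta\theta_t^i$ by at most $R$ in $\ell_2$-norm. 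Therefore each round is a Poisson-subsampled (rate $qp_i$) Gaussian mechanism with sensitivity $R$ and noise standard deviation $\sqrt2\,R\sigma$, i.e.\ with noise multiplier $\tilde\sigma=\sqrt2\,\sigma$.

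Finally I would apply Lemma \ref{lem:GDP_privacy_account} with subsampling probability $p=qp_i$, noise scale $\sqrt2\,\sigma$, and $T$ iterations, which gives that releasing $\theta_0,\dots,\theta_T$ is asymptotically $\mu_i$-GDP for $z^*$ with
\[
  \mu_i = qp_i\sqrt{T\bigl(e^{1/(2\sigma^2)}-1\bigr)}.
\]
Since $z^*$ was an arbitrary record of $\mathsf{C}_i$, this is record-level $\mu_i$-GDP for $\mathsf{C}_i$, as claimed.

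I expect the main obstacle to be the second step: justifying rigorously that the two nested Bernoulli samplings (client selection at rate $q$, then record selection at rate $p_i$) collapse, from the single record's perspective, into one Poisson-subsampled Gaussian mechanism to which the GDP subsampling CLT applies, and that the attacker's ignorance of $\mathcal{I}_t^*$ is precisely what legitimizes amplification by $q$ here (whereas in Case 1 the corrupted server knows this set, which is why $T_i$, not $qT$, appears there). A secondary point to handle is that Lemma \ref{lem:GDP_privacy_account} is an asymptotic (central-limit) statement, so the conclusion is correspondingly asymptotic, and that the fixed-point encoding of the noise contributes only a negligible perturbation.
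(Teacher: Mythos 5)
Your proposal is correct and follows essentially the same route as the paper's own proof: identify that a client-only attacker sees at best $\sum_{i\in\mathcal{I}_t^*,\,i\notin\mathcal{A}}\Delta\theta_t^i+\xi_t^\mathsf{A}+\xi_t^\mathsf{B}$ with combined noise $\mathcal{N}(0,2(R\sigma)^2\mathbf{I})$, argue sensitivity $R$ from record-level clipping (with client-level clipping only shrinking the influence), fold the unknown client selection into an effective subsampling rate $qp_i$, and apply the GDP central limit theorem. The only difference is that you articulate the nested-subsampling and asymptotics caveats more explicitly than the paper does, which is a refinement rather than a departure.
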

\begin{proof}
    Similar to the reasoning in the proof of Lemma \ref{lem:privacy_server_clients}, adding or removing one record in client $\mathsf{C}_i$'s dataset $D_i$ would change the value of $\Delta\theta_t^i$  by at most $R$ in terms of $\ell_2$-norm. For any group of malicious clients $\mathcal{A}$ (except the victim client $\mathsf{C}_i$), they are able to compute $\sum\nolimits_{i\in\mathcal{I}_t^*,i\neq\mathcal{A}}\Delta\theta_t^i+\xi_t^\mathsf{A}+\xi_t^\mathsf{B}$, where $\xi_t^\mathsf{A}+\xi_t^\mathsf{B}\sim\mathcal{N}(0,2\cdot(R\sigma)^2 \mathbf{I})$. Thus, no matter what the size of $\mathcal{A}$, the sensitivity  is $R$ and the added Gaussian noise is $\mathcal{N}(0,2\cdot(R\sigma)^2 \mathbf{I})$. On the other hand,  corrupted clients in $\mathcal{A}$ do not know whether client $\mathsf{C}_i$ participate at $t$-th iteration (recall that each client is sampled by the servers with probability $q$), thus in their perspective, one specific record of client $\mathsf{C}_i$ is subsampled with probability $q\cdot p_i$. According to Lemma \ref{lem:GDP_privacy_account}, with noise $\mathcal{N}(0,2\cdot(R\sigma)^2 \mathbf{I})$, sensitivity $R$, subsampling probability $q\cdot p_i$ and total number of iterations $T$, the privacy parameter of GDP is $\mu_i=qp_i\sqrt{T(e^{1/(2\sigma^2)}-1)}$.
\end{proof}

\textbf{Putting two cases together.}
The value of $\mu_i$ in Lemma \ref{lem:privacy_server_clients} and Lemma \ref{lem:privacy_clients} are different for different assumption of the attacker, and by putting them together we obtain the result in \eqref{equ:mu_i}. Finally, by converting $\mu$-GDP into $(\epsilon,\delta)$-DP (via Lemma \ref{lem:GDP_to_DP}), we finished the proof of Theorem \ref{thm:privacy_analysis}.

\section{Proof of Robustness (Theorem \ref{thm:robustness_analysis})}
\label{apx:proof_thm_robustness_analysis}

In this section, we show the robustness of PRECAD against model poisoning attacks by leveraging the property of \emph{client-level} DP, though client-level privacy is not our objective (recall that our privacy goal is to provide \emph{record-level} privacy). In the following, we first show that releasing the final model $\theta$ satisfies client-level GDP (in Lemma \ref{lem:client_level_GDP}). Then, we utilize this result to prove the robustness bounds (i.e., Theorem \ref{thm:robustness_analysis}), which shows that the attacker neither can increase nor reduce the \emph{expected} loss of a record too much.

\begin{lemma}[Client-level GDP]
	\label{lem:client_level_GDP}
	The final model $\theta=\theta_T$ of Algorithm \ref{alg:framework} asymptotically satisfies client-level $\mu$-GDP for all clients (including malicious clients who implement the model poisoning attack) with privacy parameter
	\begin{align}
	    \mu=q\sqrt{T(e^{1/\tilde{\sigma}^2}-1)}
	\end{align}
	where $\tilde{\sigma}=\sqrt{2}\sigma R/C$.
\end{lemma}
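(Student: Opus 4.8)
The plan is to reduce the statement to a $T$-fold composition of client-subsampled Gaussian mechanisms and then invoke the Gaussian-DP accountant of Lemma \ref{lem:GDP_privacy_account}. First I would view Algorithm \ref{alg:framework} as, at each iteration $t$, releasing only the noisy numerator $\sum_{i\in\mathcal{I}_t^*}\Delta\theta_t^i + \xi_t^{\mathsf{A}} + \xi_t^{\mathsf{B}}$; the public dataset sizes $|D_i|$, the observable valid set $\mathcal{I}_t^*$, the rescaling by $\sum_{i\in\mathcal{I}_t^*}p_i|D_i|$, and the formation of $\theta_{t+1}$ via \eqref{equ:global_model_update} are all post-processing of these noisy sums together with $\theta_0$. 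By post-processing invariance of (G)DP, it suffices to bound the client-level GDP of the sequence of noisy aggregates over $t=0,\dots,T-1$.

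Next I would pin down the per-round client-level sensitivity. Datasets neighboring at client level differ by the presence of one whole client $\mathsf{C}_j$, so adding or removing $\mathsf{C}_j$ changes the round-$t$ aggregate by either $0$ (if $\mathsf{C}_j$ is not selected, w.p.\ $1-q$, or is rejected) or by $\pm\Delta\theta_t^j$. The soundness of the secure-validation step (Sec.~\ref{sec:secure_validation}) forces $\|\Delta\theta_t^j\|_2\le C$ for every participating client, including malicious ones: a client whose real update violates the client-level bound is dropped from $\mathcal{I}_t^*$ and contributes nothing. Hence the client-level $\ell_2$-sensitivity of the round-$t$ aggregate is $C$. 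The noise on that aggregate is $\xi_t^{\mathsf{A}}+\xi_t^{\mathsf{B}}\sim\mathcal{N}(0,2(R\sigma)^2\mathbf{I})$, i.e.\ coordinate-wise standard deviation $\sqrt{2}\sigma R$, so the effective noise multiplier (standard deviation over sensitivity) is $\tilde{\sigma}=\sqrt{2}\sigma R/C$; by Lemma \ref{lem:Gaussian_mechanism_GDP}, a single round without subsampling would be $(1/\tilde{\sigma})$-GDP at client level.

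Then I would fold in subsampling and composition. Each client is included in $\mathcal{I}_t$ independently with probability $q$, so each round is a $q$-subsampled Gaussian mechanism with noise scale $\tilde{\sigma}$. I would remark that, unlike the record-level analysis of Theorem \ref{thm:privacy_analysis}, the internal record subsampling gives no further amplification here: an added client contributes a full, already-clipped update regardless of which of its records were sampled, so the only amplifying randomness at client level is the Bernoulli$(q)$ selection --- this is why $p_i$ is absent from $\mu$. Applying Lemma \ref{lem:GDP_privacy_account} with subsampling probability $q$, $T$ iterations, and noise scale $\tilde{\sigma}$ yields $\mu$-GDP with $\mu=q\sqrt{T(e^{1/\tilde{\sigma}^2}-1)}$, and post-processing transfers this to $\theta=\theta_T$, which proves the lemma. (In the proof of Theorem \ref{thm:robustness_analysis} this is then combined with group privacy, Lemma \ref{lem:GDP_group}, to get $K\mu$-GDP, converted via Lemma \ref{lem:GDP_to_DP}, and turned into the loss bounds.)

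The main obstacle I anticipate is making the sensitivity argument airtight against adversarial clients rather than merely semi-honest ones: the $\|\Delta\theta_t^i\|_2\le C$ guarantee must hold for the actual real-valued vector reconstructed from the shares, which requires the soundness of the fixed-point secure-validation protocol (and the slightly enlarged threshold used there) to rule out a malicious client sneaking in an over-norm update. A secondary point needing care is that the round-$t$ normalization $\sum_{i\in\mathcal{I}_t^*}p_i|D_i|$ also shifts when $\mathsf{C}_j$ is added or removed; since the $|D_i|$ are public and the validity pattern is part of the released transcript, this remains a legitimate post-processing step, but the reduction to a ``pure'' Gaussian mechanism should be stated carefully. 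Everything else is routine bookkeeping with the GDP calculus.
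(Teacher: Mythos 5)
Your proposal is correct and follows essentially the same route as the paper's proof: client-level sensitivity $C$ enforced by secure validation (so it holds even for malicious clients), combined noise $\xi_t^{\mathsf{A}}+\xi_t^{\mathsf{B}}\sim\mathcal{N}(0,2(R\sigma)^2\mathbf{I})$ rewritten as $\mathcal{N}(0,(C\tilde{\sigma})^2\mathbf{I})$ with $\tilde{\sigma}=\sqrt{2}\sigma R/C$, client-sampling probability $q$, and a direct application of Lemma~\ref{lem:GDP_privacy_account} over $T$ iterations. Your additional remarks on post-processing of the normalization factor and on the absence of record-level amplification are sound elaborations of points the paper leaves implicit, not a different argument.
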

\begin{proof}
    We analyze the client-level privacy by keeping track of the model updates at all iterations. Note that adding or removing one client $\mathsf{C}_j$'s output $\Delta\theta_t^j$ would change the value of $\sum\nolimits_{i\in\mathcal{I}_t^*}\Delta\theta_t^i$  by at most $C$ in terms of $\ell_2$-norm due to the \emph{client-level} clipping on $\Delta\theta_t^i$ (in Algorithm \ref{alg:local_updates}). It even holds for malicious clients who do not clip his/her output $\Delta\theta_t^j$ by $C$, which will be detected and rejected by servers (via secure validation), then it won't be included in the set $\mathcal{I}^*_t$. Thus, the client-level sensitivity of $\sum\nolimits_{i\in\mathcal{I}_t^*}\Delta\theta_t^i$ is $C$. On the other hand, the noise added in $\sum\nolimits_{i\in\mathcal{I}_t^*}\Delta\theta_t^i$ can be rewritten as $\xi_t^\mathsf{A}+\xi_t^\mathsf{B}\sim\mathcal{N}(0,(C\tilde{\sigma})^2 \mathbf{I})$ with $\tilde{\sigma}=\sqrt{2}\sigma R/C$ because $\xi_t^\mathsf{A}+\xi_t^\mathsf{B}\sim\mathcal{N}(0,2\cdot(R\sigma)^2 \mathbf{I})$. Moreover, each client $\mathsf{C}_j$ is subsampled with probability $q$. According to Lemma \ref{lem:GDP_privacy_account}, with noise $\mathcal{N}(0,(C\tilde{\sigma})^2 \mathbf{I})$, client-level sensitivity $C$, client-level subsampling probability $q$ and total number of iterations $T$, the privacy parameter of client-level GDP  is $\mu=q\sqrt{T(e^{1/\tilde{\sigma}^2}-1)}$.
\end{proof}

Now, we are ready to prove the robustness bounds against model poisoning attacks, where the attacker has full control over $K$ corrupted (malicious) clients.
\begin{proof}[Proof of Theorem \ref{thm:robustness_analysis}]
	According to Lemma \ref{lem:client_level_GDP}, randomized mechanism $\mathcal{M}$ satisfies  client-level $\mu$-GDP with privacy parameter $\mu=q\sqrt{T(e^{1/\tilde{\sigma}^2}-1)}$, where $\tilde{\sigma}=\sqrt{2}\sigma R/C$. Then, by group privacy of GDP (in Lemma \ref{lem:GDP_group}), the mechanism $\mathcal{M}$ satisfies $K\mu$-GDP for groups of size $K$. Applying Lemma \ref{lem:GDP_to_DP},  $\mathcal{M}$ satisfies $(\epsilon,\delta(\epsilon))$-DP (with client-level group size $K$) for all $\epsilon\geqslant0$, where
	\begin{align*}
		\delta(\epsilon)=\Phi\left(-\frac{\epsilon}{K\mu}+\frac{K\mu}{2}\right)-e^\epsilon\cdot\Phi\left(-\frac{\epsilon}{K\mu}-\frac{K\mu}{2}\right)
	\end{align*}
	Then, for all $\epsilon\geqslant0$, we have
	\begin{align*}
		&\mathbb{E}_{\theta\sim\mathcal{M}(\mathcal{C}^\prime_K)}[\ell(\theta,z)] \\
		=~ &\int_{0}^{B}\mathbb{P}_{\theta\sim\mathcal{M}(\mathcal{C}^\prime_K)}[\ell(\theta,z)>t] \mathrm{d}t\\
		\leqslant~ & e^\epsilon\int_{0}^{B}\mathbb{P}_{\theta\sim\mathcal{M}(\mathcal{C})}[\ell(\theta,z)>t] \mathrm{d}t + \int_{0}^{B} \delta(\epsilon)\mathrm{d}t\\
		=~ &e^\epsilon\mathbb{E}_{\theta\sim\mathcal{M}(\mathcal{C})}[\ell(\theta,z)] + B\delta(\epsilon)
	\end{align*}
	By denoting $\mathcal{L}$ and $\mathcal{L}^\prime_K$ as in \eqref{equ:expected_loss}, we have
	\begin{align*}
		\mathcal{L}^\prime_K\leqslant \inf_{\epsilon\geqslant0}~e^\epsilon\cdot\mathcal{L}+ B\delta(\epsilon) 
	\end{align*}
	which finishes the proof of \eqref{equ:robust_upper_bound}.
	
	Similarly, for all $\epsilon\geqslant0$, we have 
	\begin{align*}
		&\mathbb{E}_{\theta\sim\mathcal{M}(\mathcal{C})}[\ell(\theta,z)]
		\leqslant e^\epsilon\mathbb{E}_{\theta\sim\mathcal{M}(\mathcal{C}^\prime_K)}[\ell(\theta,z)] + B\delta(\epsilon) \\
		\Rightarrow\quad &
		\mathbb{E}_{\theta\sim\mathcal{M}(\mathcal{C}^\prime_K)}[\ell(\theta,z)] \geqslant e^{-\epsilon}(\mathbb{E}_{\theta\sim\mathcal{M}(\mathcal{C})}[\ell(\theta,z)]-B\delta(\epsilon) )
	\end{align*}
	Then,
	\begin{align*}
		\mathcal{L}^\prime_K\geqslant \sup_{\epsilon\geqslant0}~e^{-\epsilon}\cdot(\mathcal{L}- B\delta(\epsilon) )
	\end{align*}
	which finishes the proof of \eqref{equ:robust_lower_bound}.
\end{proof}

\section{Experimental Details}
\label{apx:experimental_details}

\textbf{Backdoor Attacks}. Following \cite{bagdasaryan2020backdoor}, we suppose that the attacker wants the global model to misclassify the images with backdoor features as the targeted labels (i.e., backdoor subtask) while classifying other inputs correctly (i.e., main task). To achieve this, the attacker trains the backdoor model $\theta^*$ on a mix of \emph{benign examples} (i.e., correctly labeled) and \emph{backdoor examples} (i.e., with backdoor features and modified labels), and then inject it to the global model via a model-replacement methodology (refer to Sec. \ref{sec:threat_model}). The benign examples can be obtained via accessing the original training dataset of the corrupted (malicious) clients, and the backdoor examples can be obtained by modifying the benign examples or generating new images with the backdoor feature. In this paper, we consider two types of backdoor attacks, which are described as follows:

1) Pixel-Pattern Backdoor (MNIST dataset) \cite{gu2017badnets}. The backdoor examples are generated by changing four  pixels in the bottom-right corner of all training images (of corrupted clients) from black to white and labeling them as 0, shown in Figure \ref{fig:backdoor_image_mnist}. Since the backdoor feature \emph{white pixels in the bottom-right corner}  does not occur in the original test dataset, we test the accuracy of backdoor subtask on a similarly modified version of the original test dataset (i.e., with white pixels in the bottom-right corner and label 0). Therefore,  pixel-pattern backdoor needs to modify images in both training dataset and test dataset, as well as the labels in the training dataset. 

2) Semantic Backdoor (CIFAR-10 dataset) \cite{bagdasaryan2020backdoor}.  The attacker considers \emph{cars} with vertical stripes on the background wall (shown in Figure \ref{fig:backdoor_image_wall}) as the backdoor feature and labels them as \emph{birds} to generate backdoor examples. Following \cite{bagdasaryan2020backdoor}, we split the data so that only the attacker has training images with backdoor feature, which avoids the global model to forget the backdoor too fast. Considering the original test dataset only contains 4 images with such backdoor feature, we measure the test accuracy of semantic backdoor on 1,000 randomly rotated and cropped version of the 4 backdoor images. Compared with pixel-pattern backdoor, the advantage of semantic backdoor is that the attacker does not need to modify images.

\begin{figure}[!t]
    \centering
    \includegraphics[width=1.8in]{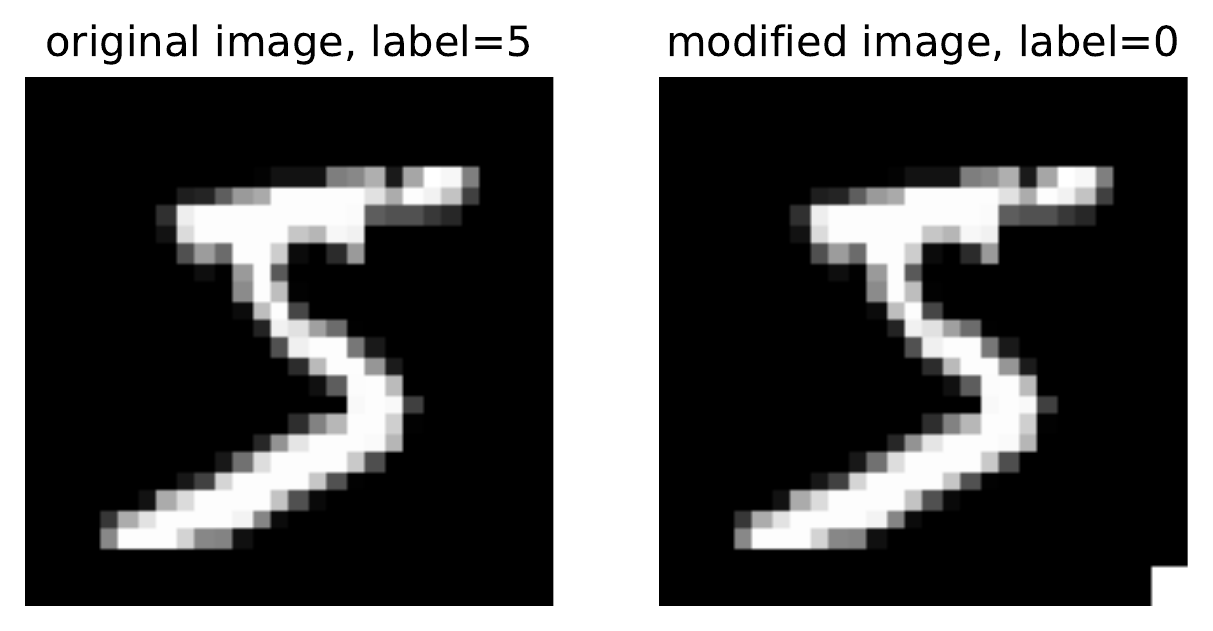}
    \vspace{-2mm}
    \caption{Pixel-pattern backdoor (MNIST dataset) flips   four pixels in the bottom-right corner (with size $2\times2$) from black to white and modifies the label to 0.}
    \vspace{-2mm}
    \label{fig:backdoor_image_mnist}
\end{figure}

\begin{figure}[!t]
    \centering
    \includegraphics[width=2.4in]{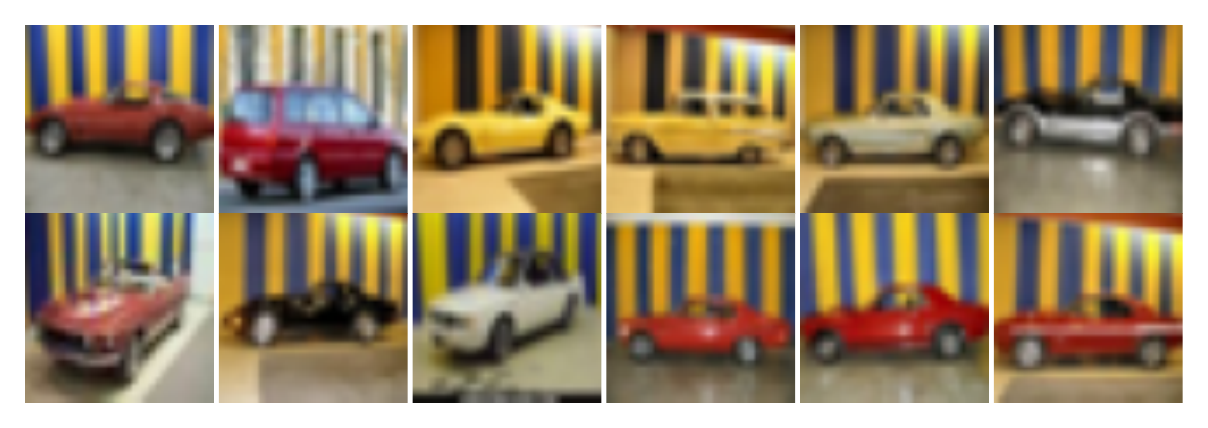}
    \hfill
    \includegraphics[width=0.86in]{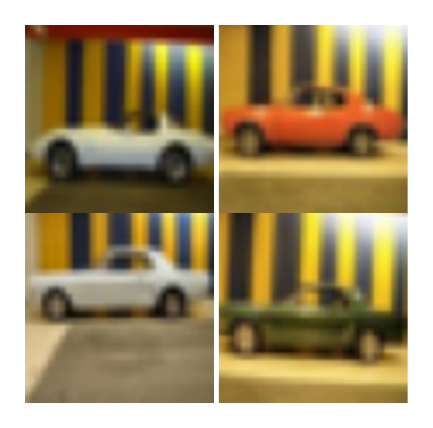}
    \vspace{-2mm}
    \caption{Semantic backdoor (CIFAR-10 dataset) modifies the labels of \emph{cars} with vertical stripes on background wall as \emph{birds}. The training dataset contains 12 backdoor examples (left) and the test dataset contains 4 backdoor examples (right).}
    \label{fig:backdoor_image_wall}
\end{figure}

\textbf{Backdoor Accuracy.}
For MNIST dataset, backdoor accuracy is the fraction of classifying images with \emph{white bottom-right pixel} as 0 (shown in Figure \ref{fig:backdoor_image_mnist}). For CIFAR-10 dataset, backdoor accuracy is the fraction of classifying 1,000 randomly rotated and cropped versions of 4  backdoor images (with true label \emph{cars}, shown in Figure \ref{fig:backdoor_image_wall})  as \emph{birds}.

\textbf{Hyperparameters for Training.}
We fix the leaning rate of the global model as $\eta=0.1$, the probability of each client is selected by the server as $p=0.1$, the probability of each record is sampled by client $\mathsf{C}_i$ as $q_i=0.05$. For private training, we set record-level clipping norm bound $R=2$. Note that the value of $R$ (which is the sensitivity for record-level privacy) does not affect the privacy accountant of record-level privacy because the standard deviation of added Gaussian noise is proportional to $R\cdot\sigma$, thus the privacy accountant is adjusted by the value of noise multiplier $\sigma$. For the training of the backdoor model, we set the learning rate as $0.02$  and the number of \emph{local} iterations as $L=5$ and (vs. $L=1$ for the benign clients). To reduce the influence of randomness on the evaluation, we repeat each experiment with 10 runs and show the averaged results.

\end{document}